\newenvironment{proof}{\noindent\textbf{Proof}}{\hfill\qed}
\newcommand{\qed}{\hfill$\Box$}
\newtheorem{theorem}{Theorem}
\newtheorem{definition}{Definition}
\begin{document}

\title{Self-Stabilization, Byzantine Containment, and Maximizable Metrics: Necessary Conditions}

\author{Swan Dubois\protect\footnote{UPMC Sorbonne Universit\'es \& INRIA, France, swan.dubois@lip6.fr} \and Toshimitsu Masuzawa\protect\footnote{Osaka University, Japan, masuzawa@ist.osaka-u.ac.jp} \and S\'{e}bastien Tixeuil\protect\footnote{UPMC Sorbonne Universit\'es \& Institut Universitaire de France, France, sebastien.tixeuil@lip6.fr}}
\date{}

\maketitle

\begin{abstract}
Self-stabilization is a versatile approach to fault-tolerance since it permits a distributed system to recover from any transient fault that arbitrarily corrupts the contents of all memories in the system. Byzantine tolerance is an attractive feature of distributed systems that permits to cope with arbitrary malicious behaviors. 

We consider the well known problem of constructing a maximum metric tree in this context. Combining these two properties is known to induce many impossibility results. In this paper, we provide two necessary conditions to construct maximum metric tree in presence of transients and (permanent) Byzantine faults.
\end{abstract}

\section{Introduction}

The advent of ubiquitous large-scale distributed systems advocates that tolerance to various kinds of faults and hazards must be included from the very early design of such systems. \emph{Self-stabilization}~\cite{D74j,D00b,T09bc} is a versatile technique that permits forward recovery from any kind of \emph{transient} faults, while \emph{Byzantine Fault-tolerance}~\cite{LSP82j} is traditionally used to mask the effect of a limited number of \emph{malicious} faults. Making distributed systems tolerant to both transient and malicious faults is appealing yet proved difficult~\cite{DW04j,DD05c,NA02c} as impossibility results are expected in many cases.

\paragraph{Related Works}
A promizing path towards multitolerance to both transient and Byzantine faults is \emph{Byzantine containment}. For \emph{local} tasks (\emph{i.e.} tasks whose correctness can be checked locally, such as vertex coloring, link coloring, or dining philosophers), the notion of \emph{strict stabilization} was proposed~\cite{NA02c,MT07j}. Strict stabilization guarantees that there exists a \emph{containment radius} outside which the effect of permanent faults is masked, provided that the problem specification makes it possible to break the causality chain that is caused by the faults. As many problems are not local, it turns out that it is impossible to provide strict stabilization for those. To circumvent impossibility results, the weaker notion of \emph{strong stabilization} was proposed~\cite{MT06cb,DMT11j}: here, correct nodes outside the containment radius may be perturbated by the actions of Byzantine node, but only a finite number of times.

Recently, the idea of generalizing strict and strong stabilization to an area that depends on the graph topology and the problem to be solved rather than an arbitrary fixed containment radius was proposed~\cite{DMT10ca,DMT10cd} and denoted by \emph{topology aware} strict (and strong) stabilization. When maximizable metric trees are considered, \cite{DMT10ca} proposed an optimal (with respect to impossibility results) protocol for topology-aware strict stabilization, and for the simpler case of breath-first-search metric trees, \cite{DMT10cd} presented a protocol that is optimal both with respect to strict and strong variants of topology-aware stabilization. The case of optimality for topology-aware strong stabilization in the general maximal metric case remains open.

\paragraph{Our Contribution} 

In this paper, we investigate the possibility of topology-aware strong stabilization for tasks that are global (\emph{i.e.} for with there exists a causality chain of size $r$, where $r$ depends on $n$ the size of the network), and focus on the maximum metric tree problem. In more details, we provide two necessary conditions to perform Byzantine containment for maximum metric tree construction. First, we characterize a specific class of maximizable metrics (which includes breath-first-search and shortest path metrics) that prevents the exitence of strong stabilizing solutions. Then, we generalize an impossibilty result of \cite{DMT10cd} that provides a lower bound on the containmemt area for topology-aware strong stabilization.  

\section{Model and Definitions}

\subsection{State Model}

A \emph{distributed system} $S=(P,L)$ consists of a set $P=\{v_1,v_2,\ldots,v_n\}$ of processes and a set $L$ of bidirectional communication links (simply called links). A link is an unordered pair of distinct processes. A distributed system $S$ can be regarded as a graph whose vertex set is $P$ and whose link set is $L$, so we use graph terminology to describe a distributed system $S$. We use the following notations: $n=|P|$, $m=|L|$ and $d(u,v)$ denotes the shortest path between two processes $u$ and $v$ (\emph{i.e} the length of the shortest path between $u$ and $v$).

Processes $u$ and $v$ are called \emph{neighbors} if $(u,v)\in L$. The set of neighbors of a process $v$ is denoted by $N_v$. We do not assume existence of a unique identifier for each process. Instead we assume each process can distinguish its neighbors from each other by locally labelling them.

In this paper, we consider distributed systems of arbitrary topology. We assume that a single process is distinguished as a \emph{root}, and all the other processes are identical. We adopt the \emph{shared state model} as a communication model in this paper, where each process can directly read the states of its neighbors.

The variables that are maintained by processes denote process states. A process may take actions during the execution of the system. An action is simply a function that is executed in an atomic manner by the process. The action executed by each process is described by a finite set of guarded actions of the form $\langle$guard$\rangle\longrightarrow\langle$statement$\rangle$. Each guard of process $u$ is a boolean expression involving the variables of $u$ and its neighbors.

A global state of a distributed system is called a \emph{configuration} and is specified by a product of states of all processes. We define $C$ to be the set of all possible configurations of a distributed system $S$. For a process set $R \subseteq P$ and two configurations $\rho$ and $\rho'$, we denote $\rho \stackrel{R}{\mapsto} \rho'$ when $\rho$ changes to $\rho'$ by executing an action of each process in $R$ simultaneously. Notice that $\rho$ and $\rho'$ can be different only in the states of processes in $R$. For completeness of execution semantics, we should clarify the configuration resulting from simultaneous actions of neighboring processes. The action of a process depends only on its state at $\rho$ and the states of its neighbors at $\rho$, and the result of the action reflects on the state of the process at $\rho '$.

We say that a process is \emph{enabled} in a configuration $\rho$ if the guard of at least one of its actions is evaluated as true in $\rho$.

A \emph{schedule} of a distributed system is an infinite sequence of process sets. Let $Q=R^1, R^2, \ldots$  be a schedule, where $R^i \subseteq P$ holds for each $i\ (i \ge 1)$. An infinite sequence of configurations $e=\rho_0,\rho_1,\ldots$ is called an \emph{execution} from an initial configuration $\rho_0$ by a schedule $Q$, if $e$ satisfies $\rho_{i-1} \stackrel{R^i}{\mapsto} \rho_i$ for each $i\ (i \ge 1)$. Process actions are executed atomically, and we distinguish some properties on the scheduler (or daemon). A \emph{distributed daemon} schedules the actions of processes such that any subset of processes can simultaneously execute their actions. We say that the daemon is \emph{central} if it schedules action of only one process at any step. The set of all possible executions from $\rho_0\in C$ is denoted by $E_{\rho_0}$. The set of all possible executions is denoted by $E$, that is, $E=\bigcup_{\rho\in C}E_{\rho}$. We consider \emph{asynchronous} distributed systems where we can make no assumption on schedules.

In this paper, we consider (permanent) \emph{Byzantine faults}: a Byzantine process (\emph{i.e.} a Byzantine-faulty process) can make arbitrary behavior independently from its actions. If $v$ is a Byzantine process, $v$ can repeatedly change its variables arbitrarily. For a given execution, the number of faulty processes is arbitrary but we assume that the root process is never faulty.

\subsection{Self-Stabilizing Protocols Resilient to Byzantine Faults}

Problems considered in this paper are so-called \emph{static problems}, \emph{i.e.} they require the system to find static solutions. For example, the spanning-tree construction problem is a static problem, while the mutual exclusion problem is not. Some static problems can be defined by a \emph{specification predicate} (shortly, specification), $spec(v)$, for each process $v$: a configuration is a desired one (with a solution) if every process satisfies $spec(v)$. A specification $spec(v)$ is a boolean expression on variables of $P_v~(\subseteq P)$ where $P_v$ is the set of processes whose variables appear in $spec(v)$. The variables appearing in the specification are called \emph{output variables} (shortly, \emph{O-variables}). In what follows, we consider a static problem defined by specification $spec(v)$.

A \emph{self-stabilizing protocol} (\cite{D74j}) is a protocol that eventually reaches a \emph{legitimate configuration}, where $spec(v)$ holds at every process $v$, regardless of the initial configuration. Once it reaches a legitimate configuration, every process never changes its O-variables and always satisfies $spec(v)$. From this definition, a self-stabilizing protocol is expected to tolerate any number and any type of transient faults since it can eventually recover from any configuration affected by the transient faults. However, the recovery from any configuration is guaranteed only when every process correctly executes its action from the configuration, \emph{i.e.}, we do not consider existence of permanently faulty processes.

When (permanent) Byzantine processes exist, Byzantine processes may not satisfy $spec(v)$. In addition, correct processes near the Byzantine processes can be influenced and may be unable to satisfy $spec(v)$. Nesterenko and Arora~\cite{NA02c} define a \emph{strictly stabilizing protocol} as a self-stabilizing protocol resilient to unbounded number of Byzantine processes.

Given an integer $c$, a \emph{$c$-correct process} is a process defined as follows.

\begin{definition}[$c$-correct process]
A process is $c$-correct if it is correct (\emph{i.e.} not Byzantine) and located at distance more than $c$ from any Byzantine process.
\end{definition}

\begin{definition}[$(c,f)$-containment]
\label{def:cfcontained}
A configuration $\rho$ is \emph{$(c,f)$-contained} for specification $spec$ if, given at most $f$ Byzantine processes, in any execution starting from $\rho$, every $c$-correct process $v$ always satisfies $spec(v)$ and never changes its O-variables.
\end{definition}

The parameter $c$ of Definition~\ref{def:cfcontained} refers to the \emph{containment radius} defined in \cite{NA02c}. The parameter $f$ refers explicitly to the number of Byzantine processes, while \cite{NA02c} dealt with unbounded number of Byzantine faults (that is $f\in\{0\ldots n\}$).

\begin{definition}[$(c,f)$-strict stabilization]
\label{def:cfstabilizing}
A protocol is \emph{$(c,f)$-strictly stabilizing} for specification $spec$ if, given at most $f$ Byzantine processes, any execution $e=\rho_0,\rho_1,\ldots$ contains a configuration $\rho_i$ that is $(c,f)$-contained for $spec$.
\end{definition}

An important limitation of the model of \cite{NA02c} is the notion of $r$-\emph{restrictive} specifications. Intuitively, a specification is $r$-restrictive if it prevents combinations of states that belong to two processes $u$ and $v$ that are at least $r$ hops away. An important consequence related to Byzantine tolerance is that the containment radius of protocols solving those specifications is at least $r$. For some (global) problems $r$ can not be bounded by a constant. In consequence, we can show that there exists no $(c,1)$-strictly stabilizing
protocol for such a problem for any (finite) integer $c$.

\paragraph{Strong stabilization} To circumvent such impossibility results, \cite{DMT11j} defines a weaker notion than the strict stabilization. Here, the requirement to the containment radius is relaxed, \emph{i.e.} there may exist processes outside the containment radius that invalidate the specification predicate, due to Byzantine actions. However, the impact of Byzantine triggered action is limited in times: the set of Byzantine processes may only impact processes outside the containment radius a bounded number of times, even if Byzantine processes execute an infinite number of actions.

In the following of this section, we recall the formal definition of strong stabilization adopted in \cite{DMT11j}. From the states of $c$-correct processes, \emph{$c$-legitimate configurations} and \emph{$c$-stable configurations} are defined as follows.

\begin{definition}[$c$-legitimate configuration]
A configuration $\rho$ is $c$-legitimate for \emph{spec} if every $c$-correct process $v$ satisfies $spec(v)$.
\end{definition}

\begin{definition}[$c$-stable configuration]
A configuration $\rho$ is $c$-stable if every $c$-correct process never changes the values of its O-variables as long as Byzantine processes make no action.
\end{definition}

Roughly speaking, the aim of self-stabilization is to guarantee that a distributed system eventually reaches a $c$-legitimate and $c$-stable configuration. However, a self-stabilizing system can be disturbed by Byzantine processes after reaching a $c$-legitimate and $c$-stable configuration. The \emph{$c$-disruption} represents the period where $c$-correct processes are disturbed by Byzantine processes and is defined as follows 

\begin{definition}[$c$-disruption]
A portion of execution $e=\rho_0,\rho_1,\ldots,\rho_t$ ($t>1$) is a $c$-disruption if and only if the following holds:
\begin{enumerate}
\item $e$ is finite,
\item $e$ contains at least one action of a $c$-correct process for changing the value of an O-variable,
\item $\rho_0$ is $c$-legitimate for \emph{spec} and $c$-stable, and
\item $\rho_t$ is the first configuration after $\rho_0$ such that $\rho_t$ is $c$-legitimate for \emph{spec} and $c$-stable.
\end{enumerate}
\end{definition}

Now we can define a self-stabilizing protocol such that Byzantine processes may only impact processes outside the containment radius a bounded number of times, even if Byzantine processes execute an infinite number of actions.

\begin{definition}[$(t,k,c,f)$-time contained configuration]
A configuration $\rho_0$ is $(t,k,c,f)$-time contained for \emph{spec} if given at most $f$ Byzantine processes, the following properties are satisfied:
\begin{enumerate}
\item $\rho_0$ is $c$-legitimate for \emph{spec} and $c$-stable,
\item every execution starting from $\rho_0$ contains a $c$-legitimate configuration for \emph{spec} after which the values of all the O-variables of $c$-correct processes remain unchanged (even when Byzantine processes make actions repeatedly and forever), 
\item every execution starting from $\rho_0$ contains at most $t$ $c$-disruptions, and 
\item every execution starting from $\rho_0$ contains at most $k$ actions of changing the values of O-variables for each $c$-correct process.
\end{enumerate}
\end{definition}

\begin{definition}[$(t,c,f)$-strongly stabilizing protocol]
A protocol $A$ is $(t,c,f)$-strongly stabilizing if and only if starting from any arbitrary configuration, every execution involving at most $f$ Byzantine processes contains a $(t,k,c,f)$-time contained configuration that is reached after at most $l$ rounds. Parameters $l$ and $k$ are respectively the $(t,c,f)$-stabilization time and the $(t,c,f)$-process-disruption times of $A$.
\end{definition}

Note that a $(t,k,c,f)$-time contained configuration is a $(c,f)$-contained configuration when $t=k=0$, and thus, $(t,k,c,f)$-time contained configuration is a generalization (relaxation) of a $(c,f)$-contained configuration. Thus, a strongly stabilizing protocol is weaker than a strictly stabilizing one (as processes outside the containment radius may take incorrect actions due to Byzantine influence). However, a strongly stabilizing protocol is stronger than a classical self-stabilizing one (that may never meet their specification in the presence of Byzantine processes).

The parameters $t$, $k$ and $c$ are introduced to quantify the strength of fault containment, we do not require each process to know the values of the parameters.

\paragraph{Topology-aware Byzantine resilience} We saw previously that there exist a number of impossibility results on strict stabilization due to the notion of $r$-restrictives specifications. To circumvent this impossibility result, we describe here another weaker notion than the strict stabilization: the \emph{topology-aware strict stabilization} (denoted by TA strict stabilization for short) introduced by \cite{DMT10ca}. Here, the requirement to the containment radius is relaxed, \emph{i.e.} the set of processes which may be disturbed by Byzantine ones is not reduced to the union of $c$-neighborhood of Byzantine processes (\emph{i.e.} the set of processes at distance at most $c$ from a Byzantine process) but can be defined depending on the graph topology and Byzantine processes location.

In the following, we give formal definition of this new kind of Byzantine containment. From now, $B$ denotes the set of Byzantine processes and $S_B$ (which is function of $B$) denotes a subset of $V$ (intuitively, this set gathers all processes which may be disturbed by Byzantine processes).

\begin{definition}[$S_{B}$-correct node]
A node is \emph{$S_{B}$-correct} if it is a correct node (\emph{i.e.} not Byzantine) which not belongs to $S_{B}$.
\end{definition}

\begin{definition}[$S_{B}$-legitimate configuration]
A configuration $\rho$ is \emph{$S_{B}$-legitimate} for $spec$ if every $S_{B}$-correct node $v$ is legitimate for $spec$ (\emph{i.e.} if $spec(v)$ holds).
\end{definition}

\begin{definition}[$(S_{B},f)$-topology-aware containment]
\label{def:SfTAcontained}
A configuration $\rho_{0}$ is \emph{$(S_{B},f)$-topology-aware contained} for specification $spec$ if, given at most $f$ Byzantine processes, in any execution $e=\rho_0,\rho_1,\ldots$, every configuration is $S_{B}$-legitimate and every $S_B$-correct process never changes its O-variables. 
\end{definition}

The parameter $S_{B}$ of Definition~\ref{def:SfTAcontained} refers to the \emph{containment area}. Any process which belongs to this set may be infinitely disturbed by Byzantine processes. The parameter $f$ refers explicitly to the number of Byzantine processes.

\begin{definition}[$(S_{B},f)$-topology-aware strict stabilization]
\label{def:SfTAStrictstabilizing}
A protocol is \emph{$(S_{B},f)$-topology-aware strictly stabilizing} for specification $spec$ if, given at most $f$ Byzantine processes, any execution $e=\rho_0,\rho_1,\ldots$ contains a configuration $\rho_i$ that is $(S_{B},f)$-topology-aware contained for $spec$.
\end{definition}

Note that, if $B$ denotes the set of Byzantine processes and $S_{B}=\left\{v\in V|\underset{b\in B}{min}\left(d(v,b)\right)\leq c\right\}$, then a $(S_{B},f)$-topology-aware strictly stabilizing protocol is a $(c,f)$-strictly stabilizing protocol. Then, the concept of topology-aware strict stabilization is a generalization of the strict stabilization. However, note that a TA strictly stabilizing protocol is stronger than a classical self-stabilizing protocol (that may never meet their specification in the presence of Byzantine processes). The parameter $S_{B}$ is introduced to quantify the strength of fault containment, we do not require each process to know the actual definition of the set.

Similarly to topology-aware strict stabilization, we can weaken the notion of strong stabilization using the notion of containment area. This idea was introduced by \cite{DMT10cd}. We recall in the following the formal definition of this concept.

\begin{definition}[$S_B$-stable configuration]
A configuration $\rho$ is $S_B$-stable if every $S_B$-correct process never changes the values of its O-variables as long as Byzantine processes make no action.
\end{definition}

\begin{definition}[$S_{B}$-TA-disruption]
A portion of execution $e=\rho_0,\rho_1,\ldots,\rho_t$ ($t>1$) is a $S_{B}$-TA-disruption if and only if the followings hold:
\begin{enumerate}
\item $e$ is finite,
\item $e$ contains at least one action of a $S_{B}$-correct process for changing the value of an O-variable,
\item $\rho_0$ is $S_{B}$-legitimate for $spec$ and $S_B$-stable, and
\item $\rho_t$ is the first configuration after $\rho_0$ such that $\rho_t$ is $S_{B}$-legitimate for $spec$ and $S_B$-stable.
\end{enumerate}
\end{definition}

\begin{definition}[$(t,k,S_{B},f)$-TA time contained configuration]
A configuration $\rho_0$ is $(t,k,S_{B},$ $f)$-TA time contained for \emph{spec} if given at most $f$ Byzantine processes, the following properties are satisfied:
\begin{enumerate}
\item $\rho_0$ is $S_{B}$-legitimate for \emph{spec} and $S_B$-stable,
\item every execution starting from $\rho_0$ contains a $S_B$-legitimate configuration for \emph{spec} after which the values of all the O-variables of $S_B$-correct processes remain unchanged (even when Byzantine processes make actions repeatedly and forever), 
\item every execution starting from $\rho_0$ contains at most $t$ $S_B$-TA-disruptions, and 
\item every execution starting from $\rho_0$ contains at most $k$ actions of changing the values of O-variables for each $S_B$-correct process.
\end{enumerate}
\end{definition}

\begin{definition}[$(t,S_{B},f)$-TA strongly stabilizing protocol]
A protocol $A$ is $(t,S_{B},f)$-TA\\ strongly stabilizing if and only if starting from any arbitrary configuration, every execution involving at most $f$ Byzantine processes contains a $(t,k,S_{B},f)$-TA-time contained configuration that is reached after at most $l$ actions of each $S_{B}$-correct node. Parameters $l$ and $k$ are respectively the $(t,S_{B},f)$-stabilization time and the $(t,S_{B},f)$-process-disruption time of $A$.
\end{definition}

\section{Maximum Metric Tree Construction}

\subsection{Definition and Specification}

In this work, we deal with maximum (routing) metric trees as defined in \cite{GS03j}. Informally, the goal of a routing protocol is to construct a tree that simultaneously maximizes the metric values of all of the nodes with respect to some total ordering $\prec$. In the following, we recall all definitions and notations introduced in \cite{GS03j}. 

\begin{definition}[Routing metric]
A \emph{routing metric} (or just \emph{metric}) is a five-tuple $(M,W,met,mr,$ $\prec)$ where:
\begin{enumerate}
\item $M$ is a set of metric values,
\item $W$ is a set of edge weights,
\item $met$ is a metric function whose domain is $M\times W$ and whose range is $M$,
\item $mr$ is the maximum metric value in $M$ with respect to $\prec$ and is assigned to the root of the system,
\item $\prec$ is a less-than total order relation over $M$ that satisfies the following three conditions for arbitrary metric values $m$, $m'$, and $m''$ in $M$:
\begin{enumerate}
\item irreflexivity: $m\not\prec m$,
\item transitivity : if $m\prec m'$ and $m'\prec m''$ then $m\prec m''$,
\item totality: $m\prec m'$ or $m'\prec m$ or $m=m'$.
\end{enumerate}
\end{enumerate}
Any metric value $m\in M\setminus\{mr\}$ satisfies the \emph{utility condition} (that is, there exist $w_0,\ldots,w_{k-1}$ in $W$ and $m_0=mr,m_1,\ldots,m_{k-1},m_{k}=m$ in $M$ such that $\forall i\in\{1,\ldots,k\},m_i=met(m_{i-1},w_{i-1})$).
\end{definition}

For instance, we provide the definition of four classical metrics with this model: the shortest path metric ($\mathcal{SP}$), the flow metric ($\mathcal{F}$), and the reliability metric ($\mathcal{R}$). Note also that we can modelise the construction of a spanning tree with no particular constraints in this model using the metric $\mathcal{NC}$ described below and the construction of a BFS spanning tree using the shortest path metric ($\mathcal{SP}$) with $W_1=\{1\}$ (we denoted this metric by $\mathcal{BFS}$ in the following).

\[\begin{array}{rclrcl}
\mathcal{SP}&=&(M_1,W_1,met_1,mr_1,\prec_1)&\mathcal{F}&=&(M_2,W_2,met_2,mr_2,\prec_2)\\
\text{where}& & M_1=\mathbb{N}&\text{where}& & mr_2\in\mathbb{N}\\
&& W_1=\mathbb{N}&&& M_2=\{0,\ldots,mr_2\}\\
&& met_1(m,w)=m+w&&& W_2=\{0,\ldots,mr_2\}\\
&& mr_1=0&&& met_2(m,w)=min\{m,w\}\\
&& \prec_1 \text{ is the classical }>\text{ relation}&&& \prec_2 \text{ is the classical }<\text{ relation}
\end{array}\]
\[\begin{array}{rclrcl}
\mathcal{R}&=&(M_3,W_3,met_3,mr_3,\prec_3) & \mathcal{NC}&=&(M_4,W_4,met_4,mr_4,\prec_4)\\
\text{where}& & M_3=[0,1] & \text{where}& & M_4=\{0\} \\
&& W_3=[0,1] &&& W_4=\{0\}\\
&& met_3(m,w)=m*w &&& met_4(m,w)=0\\
&& mr_3=1 &&& mr_4=0\\
&& \prec_3 \text{ is the classical }<\text{ relation} &&& \prec_4 \text{ is the classical }<\text{ relation}
\end{array}\]

\begin{definition}[Assigned metric]
An \emph{assigned metric} over a system $S$ is a six-tuple $(M,W,met,$ $mr,\prec,wf)$ where $(M,W,met,mr,\prec)$ is a metric and $wf$ is a function that assigns to each edge of $S$ a weight in $W$.
\end{definition}

Let a rooted path (from $v$) be a simple path from a process $v$ to the root $r$. The next set of definitions are with respect to an assigned metric $(M,W,met,mr,\prec,wf)$ over a given system $S$.

\begin{definition}[Metric of a rooted path]
The \emph{metric of a rooted path} in $S$ is the prefix sum of $met$ over the edge weights in the path and $mr$.
\end{definition}

For example, if a rooted path $p$ in $S$ is $v_k,\ldots,v_0$ with $v_0=r$, then the metric of $p$ is $m_k=met(m_{k-1},wf(\{v_k,v_{k-1}\})$ with $\forall i\in\{1,\ldots,k-1\},m_i=met(m_{i-1},wf(\{v_i,v_{i-1}\})$ and $m_0=mr$.

\begin{definition}[Maximum metric path]
A rooted path $p$ from $v$ in $S$ is called a \emph{maximum metric path} with respect to an assigned metric if and only if for every other rooted path $q$ from $v$ in $S$, the metric of $p$ is greater than or equal to the metric of $q$ with respect to the total order $\prec$. 
\end{definition}
 
\begin{definition}[Maximum metric of a node]
The \emph{maximum metric of a node} $v\neq r$ (or simply \emph{metric value} of $v$) in $S$ is defined by the metric of a maximum metric path from $v$. The maximum metric of $r$ is $mr$. 
\end{definition}

\begin{definition}[Maximum metric tree]
A spanning tree $T$ of $S$ is a \emph{maximum metric tree} with respect to an assigned metric over $S$ if and only if every rooted path in $T$ is a maximum metric path in $S$ with respect to the assigned metric.
\end{definition}

The goal of the work of \cite{GS03j} is the study of metrics that always allow the construction of a maximum metric tree. More formally, the definition follows.

\begin{definition}[Maximizable metric]
A metric is \emph{maximizable} if and only if for any assignment of this metric over any system $S$, there is a maximum metric tree for $S$ with respect to the assigned metric.
\end{definition}

Given a maximizable metric $\mathcal{M}=(M,W,mr,met,\prec)$, the aim of this work is to study the construction of a maximum metric tree with respect to $\mathcal{M}$ which spans the system in a self-stabilizing way in a system subject to permanent Byzantine failures. It is obvious that these Byzantine processes may disturb some correct processes. It is why, we relax the problem in the following way: we want to construct a maximum metric forest with respect to $\mathcal{M}$. The root of any tree of this forest must be either the real root or a Byzantine process. 

Each process $v$ has two O-variables: a pointer to its parent in its tree ($prnt_v\in N_v\cup\{\bot\}$) and a level which stores its current metric value ($level_v\in M$). Obviously, Byzantine process may disturb (at least) their neighbors. We use the following specification of the problem.

We introduce new notations as follows. Given an assigned metric $(M,W,met,mr,\prec,wf)$ over the system $S$ and two processes $u$ and $v$, we denote by $\mu(u,v)$ the maximum metric of node $u$ when $v$ plays the role of the root of the system. If $u$ and $v$ are neighbors, we denote by $w_{u,v}$ the weight of the edge $\{u,v\}$ (that is, the value of $wf(\{u,v\})$).

\begin{definition}[$\mathcal{M}$-path]
Given an assigned metric $\mathcal{M}=(M,W,mr,met,\prec,wf)$ over a system $S$, a path $(v_0,\ldots,v_k)$ ($k\geq 1$) of $S$ is a \emph{$\mathcal{M}$-path} if and only if:
\begin{enumerate}
\item $prnt_{v_0}=\bot$, $level_{v_0}=0$, and $v_0\in B\cup\{r\}$,
\item $\forall i\in\{1,\ldots,k\}, prnt_{v_i}=v_{i-1}$ and $level_{v_i}=met(level_{v_{i-1}},w_{v_i,v_{i-1}})$,
\item $\forall i\in\{1,\ldots,k\}, met(level_{v_{i-1}},w_{v_i,v_{i-1}})=\underset{u\in N_v}{max_\prec}\{met(level_{u},w_{v_i,u})\}$, and
\item $level_{v_{k}}=\mu(v_k,v_0)$.
\end{enumerate}
\end{definition}

We define the specification predicate $spec(v)$ of the maximum metric tree construction with respect to a maximizable metric $\mathcal{M}$ as follows.
\[spec(v) : \begin{cases}
 prnt_v = \bot \text{ and }  level_v = 0 \text{ if } v \text{ is the root } r \\
 \text{there exists a }\mathcal{M}\text{-path } (v_0,\ldots,v_k) \text{ such that } v_k=v \text{ otherwise}
\end{cases}\]

\subsection{Previous results}

In this section, we summarize known results about maximum metric tree construction. The first interesting result about maximizable metrics is due to \cite{GS03j} that provides a fully characterization of maximizable metrics as follow.

\begin{definition}[Boundedness]
A metric $(M,W,met,mr,\prec)$ is \emph{bounded} if and only if: $\forall m \in M,\forall w\in W, met(m,w)\prec m \text{ or }met(m,w)=m$
\end{definition}

\begin{definition}[Monotonicity]
A metric $(M,W,met,mr,\prec)$ is \emph{monotonic} if and only if: $\forall (m,$ $m')\in M^2,\forall w\in W, m\prec m'\Rightarrow (met(m,w)\prec met(m',w)\text{ or }met(m,w)=met(m',w))$
\end{definition}

\begin{theorem}[Characterization of maximizable metrics \cite{GS03j}]
A metric is maximizable if and only if this metric is bounded and monotonic.
\end{theorem}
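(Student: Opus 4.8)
The statement is an equivalence, so I would establish the two implications separately; the constructive heart is the direction ``bounded and monotonic $\Rightarrow$ maximizable'', which I would prove by a greedy, Dijkstra-like algorithm, and the converse I would prove by contrapositive through explicit counterexample graphs.

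For the forward direction, fix any assignment $(M,W,met,mr,\prec,wf)$ of a bounded and monotonic metric over a (connected) system $S$. Since a rooted path is by definition a simple path, every node has only finitely many rooted paths, so its maximum metric $\mu(v,r)$ is well defined. I would then grow a tree $T$ from $\{r\}$ with $level_r=mr$: at each step, among all edges $\{u,z\}$ with $u$ currently in $T$ and $z\notin T$, select one maximizing $met(level_u,w_{u,z})$ with respect to $\prec$, and add $z$ to $T$ with $prnt_z=u$ and $level_z=met(level_u,w_{u,z})$. Connectivity guarantees this terminates with a spanning tree. The crux is the invariant: \emph{after every step, $level_v=\mu(v,r)$ for each $v\in T$ and the path from $v$ to $r$ in $T$ is a maximum metric path}. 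I would prove this by induction on the steps. Let $z$ be the node just added via $\{u,z\}$. On one hand, $level_z=met(level_u,w_{u,z})$ is the metric of the rooted path obtained by prolonging the (maximum metric) $T$-path of $u$ with the edge $\{z,u\}$, so $\mu(z,r)\succeq level_z$. On the other hand, suppose some rooted path $q$ from $z$ had metric strictly above $level_z$; reading $q$ from $r$, let $\{y,y'\}$ be its first edge with $y$ in $T$ (before $z$ was added) and $y'$ outside. The prefix of $q$ up to $y$ is a rooted path of $y$, so by the induction hypothesis its metric is $\preceq\mu(y,r)=level_y$; \emph{by monotonicity}, the metric of $q$ at $y'$ is then $\preceq met(level_y,w_{y,y'})$, which was one of the quantities compared when $z$ was chosen, hence $\preceq level_z$; and \emph{by boundedness}, the metric of $q$ at $z$ is $\preceq$ its metric at $y'$, hence $\preceq level_z$ --- a contradiction. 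Thus $level_z=\mu(z,r)$, its $T$-path is a maximum metric path, and the other nodes are unchanged, which closes the induction. At termination every rooted path of $T$ is a maximum metric path, i.e.\ $T$ is a maximum metric tree; since $S$ and the assignment were arbitrary, the metric is maximizable.

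For the converse I would show: if the metric is not bounded or not monotonic, there is an assignment over some system admitting no maximum metric tree. If boundedness fails, pick $m\in M$ and $w\in W$ with $m\prec met(m,w)$ (note $m\neq mr$, else $mr$ would not be $\prec$-maximum), realize $m$ at the end of a path from $r$ using the utility condition, and attach at that endpoint a cycle all of whose edges have weight $w$. Any spanning tree omits exactly one cycle edge, and then the cycle vertex adjacent to the attachment point on the ``short side'' of the omitted edge gets $level=met(m,w)$, whereas going the long way around the cycle yields a strictly larger metric once the cycle is long enough; hence no spanning tree is a maximum metric tree. If monotonicity fails, pick $m\prec m'$ and $w$ with $met(m,w)\succ met(m',w)$, realize $m$ and $m'$ at the ends of two disjoint paths from $r$, and glue a small gadget so that some node's unique maximum metric path passes through a vertex whose induced suffix is not itself a maximum metric path --- which makes a globally consistent choice of maximum metric paths (i.e.\ a maximum metric tree) impossible.

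The main obstacle is the invariant in the forward direction: it is essentially a correctness proof for a Dijkstra-like algorithm in this abstract metric setting, and the two hypotheses must be inserted at exactly the right places --- monotonicity to pass from the prefix of a competing path to its one-step extension, boundedness to discard the remaining suffix --- and one must check that dropping either one really breaks the argument (which is what the counterexamples of the converse exhibit). A secondary difficulty is designing the gadgets of the converse so that they genuinely preclude a tree, in particular handling the degenerate situations --- e.g.\ when iterating $met(\cdot,w)$ from the witnessing value stabilizes immediately --- by choosing the witnessing values and edge weights (again via the utility condition) suitably.
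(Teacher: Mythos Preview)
The paper does not prove this theorem: it is quoted in Section~3.2 (``Previous results'') as a result of \cite{GS03j} and is used only as background for the paper's own impossibility results. There is therefore no in-paper proof to compare your proposal against.

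On the substance of your proposal: the forward direction is the right idea and is correctly organised --- the Dijkstra-type greedy tree with the invariant ``every node already in $T$ has $level_v=\mu(v,r)$'' is exactly how Gouda and Schneider argue, and you place boundedness and monotonicity at the correct points of the induction step. The converse, however, is not yet a proof. The boundedness gadget (a cycle of weight-$w$ edges hanging off a node of value $m$) is fine once you check that the iterates $met^{(i)}(m,w)$ keep strictly increasing long enough; but your monotonicity gadget is only a specification of what you want (``some node's unique maximum metric path passes through a vertex whose induced suffix is not itself a maximum metric path''), not a construction. Simply putting $m$ and $m'$ at the ends of two paths and attaching a common neighbour with weight $w$ does \emph{not} obstruct a maximum metric tree: you can take the $m$-branch for the new node and the $m'$-branch for the node of value $m'$, and both are happy. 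To force a conflict you need an edge that makes the $m$-value itself reachable only \emph{through} the node of value $m'$ (so that any tree giving that intermediate node its optimum $m'$ is forced to feed the wrong value $met(m',w)$ downstream); this is the actual gadget in \cite{GS03j}, and it requires a little more care than your sketch suggests --- in particular you must use the utility condition to realise $m$ one hop past the node of value $m'$, not on a disjoint branch. You flag this as a ``secondary difficulty'', but as written it is the missing step.
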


Secondly, \cite{GS99c} provides a self-stabilizing protocol to construct a maximum metric tree with respect to any maximizable metric. Now, we focus on self-stabilizating solutions resilient to Byzantine faults. Following discussion of Section 2, it is obvious that there exists no strictly stabilizing protocol for this problem. If we consider the weaker notion of topology-aware strict stabilization, \cite{DMT10ca} defines the best containment area as:

\[S_{B}=\left\{v\in V\setminus B\left|\mu(v,r)\preceq max_\prec\{\mu(v,b),b\in B\}\right.\right\}\setminus\{r\}\]

Intuitively, $S_B$ gathers correct processes that are closer (or at equal distance) from a Byzantine process than the root according to the metric. Moreover, \cite{DMT10ca} proves that the algorithm introduced for the maximum metric spanning tree construction in \cite{GS99c} performed this optimal containment area. More formally, \cite{DMT10ca} proves the following results.

\begin{theorem}[\cite{DMT10ca}]\label{th:impTAstrict}
Given a maximizable metric $\mathcal{M}=(M,W,mr,met,\prec)$, even under the central daemon, there exists no $(A_B,1)$-TA-strictly stabilizing protocol for maximum metric spanning tree construction with respect to $\mathcal{M}$ where $A_B\varsubsetneq S_B$.
\end{theorem}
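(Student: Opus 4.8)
## Proof Proposal

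The plan is to prove this impossibility result by a standard indistinguishability (or "symmetry") argument: I will exhibit a single system $S$, an assignment of the metric, and a placement of one Byzantine process such that, if a protocol claimed to be $(A_B,1)$-TA-strictly stabilizing for some $A_B \varsubsetneq S_B$, then there is a process $v \in S_B \setminus A_B$ that an outside observer cannot distinguish from a process that is legitimately receiving its metric value from the Byzantine process rather than from the root. Concretely, fix a target process $v_0 \in S_B \setminus A_B$; by definition of $S_B$ this means $\mu(v_0,r) \preceq \max_\prec\{\mu(v_0,b) : b \in B\}$, so there is a Byzantine process $b$ with $\mu(v_0,b) \succeq \mu(v_0,r)$. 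The idea is that $b$, behaving maliciously, can simulate being the root of a maximum metric tree on the "$b$-side" of the network, forcing $v_0$ to adopt a metric value at least as good as $\mu(v_0,r)$ and a parent pointer directed toward $b$ — which violates $spec(v_0)$, since $spec(v_0)$ requires an $\mathcal{M}$-path rooted at $r$ (or a Byzantine process) realizing $v_0$'s true maximum metric, and forces $v_0$ to change its O-variables, contradicting topology-aware containment.

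The key steps, in order: (1) Start from the known optimality result (Theorem~\ref{th:impTAstrict} is itself the statement to prove, so I really mean: start from the structure used in \cite{DMT10ca}) — pick the hardest instance, namely one where $S_B \setminus A_B$ is nonempty, which is exactly the hypothesis $A_B \varsubsetneq S_B$. (2) Build two executions, $e_1$ and $e_2$: in $e_1$ the process $b$ is correct and genuinely the root of a subtree, in a modified graph $S_1$; in $e_2$, $b$ is Byzantine in the actual graph $S$ and replays exactly the state sequence it had in $e_1$. Since a correct process' actions depend only on the states of its neighbors, every correct process in the relevant neighborhood behaves identically in $e_1$ and $e_2$. (3) Use boundedness and monotonicity of the maximizable metric (the characterization theorem) to argue that in $e_1$ the legitimate configuration has $v_0$ pointing toward $b$ with level $\mu(v_0,b) \succeq \mu(v_0,r)$, so in $e_2$ the process $v_0$ is driven to the same (incorrect, from the true root's perspective) O-variable values — in particular it must change its O-variables, or already be in a state violating $spec$. (4) Conclude that no configuration reached in $e_2$ can be $(A_B,1)$-topology-aware contained, because $v_0 \in A_B^c$... wait — $v_0 \notin A_B$ but $v_0$ is $A_B$-correct only if $v_0 \notin A_B$; since $v_0 \in S_B \setminus A_B$, $v_0$ \emph{is} $A_B$-correct (it is not Byzantine and not in $A_B$), so containment demands $spec(v_0)$ holds and $v_0$'s O-variables are frozen — contradiction. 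This is the crux, and it is why $A_B \varsubsetneq S_B$ is the precise threshold.

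The main obstacle I anticipate is step (3): carefully choosing the edge weights and the auxiliary graph $S_1$ so that (a) the metric assignment is legal, (b) $b$'s simulated behavior is genuinely achievable by a correct root in $S_1$ (so that the protocol, by its self-stabilization guarantee, really does drive $v_0$ to the bad state), and (c) $\mu(v_0,b) \succeq \mu(v_0,r)$ holds with the required strictness to force a genuine O-variable change rather than a vacuous one. Handling the boundary case $\mu(v_0,b) = \mu(v_0,r)$ requires extra care: here I would argue that even with equal metric values, the \emph{parent pointer} of $v_0$ in the $\mathcal{M}$-path rooted at $r$ differs from the one $b$ forces, so an O-variable still changes; alternatively, by perturbing weights slightly (using totality of $\prec$ and the freedom in the assignment) one can make the inequality strict. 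A secondary technical point is that the argument must work under the central daemon, which is actually the easier direction (fewer schedules to rule out) but must be stated explicitly, since a central-daemon impossibility is stronger than a distributed-daemon one.
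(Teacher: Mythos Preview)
The paper does \emph{not} contain a proof of this statement: Theorem~\ref{th:impTAstrict} is quoted from \cite{DMT10ca} as a previously known result, and no argument for it appears anywhere in the present paper. So there is nothing in this paper to compare your proposal against directly.

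That said, the paper \emph{does} prove the closely related Theorem~\ref{th:impTAstrong} (the TA-strong analogue), and its technique is the natural template for the strict case as well. There, the authors do not introduce an auxiliary graph $S_1$ or an abstract indistinguishability argument. Instead they build one concrete small symmetric system (six processes on a path/cycle with carefully chosen weights $w,w'$), let the Byzantine process $b$ set $prnt_b=\bot$, $level_b=mr$ and mimic the root's actions step for step, invoke symmetry plus self-stabilization to reach a legitimate configuration in which the target process points toward $b$, then let $b$ behave correctly so the system restabilizes with the target pointing toward $r$, and repeat. The contradiction comes purely from the forced O-variable \emph{changes} at a process outside $A_B^*$, not from any single violation of $spec$.

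Your outline is headed in the same direction, but two points deserve tightening. First, your sentence ``which violates $spec(v_0)$'' is misleading: as you yourself note, an $\mathcal{M}$-path may be rooted at a Byzantine process, so $v_0$ pointing toward $b$ with level $\mu(v_0,b)$ \emph{satisfies} $spec$. The contradiction is entirely that $v_0$ must later change its O-variables when $b$ switches behaviour; you should make that the headline rather than an afterthought. Second, the auxiliary-graph construction $S_1$ is heavier machinery than needed and introduces a subtlety you flag yourself (what does it mean for $b$ to be ``the root'' in a model with a single distinguished root?). The symmetry-in-place argument used for Theorem~\ref{th:impTAstrong}---construct a symmetric weighted instance so that $r$ and $b$ are interchangeable, and have $b$ replay $r$'s actions---sidesteps this entirely and handles the equality case $\mu(v_0,b)=\mu(v_0,r)$ without the weight-perturbation patch you propose.
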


\begin{theorem}[\cite{DMT10ca}]\label{th:SSMAXstrict}
Given a maximizable metric $\mathcal{M}=(M,W,mr,met,\prec)$, the protocol of \cite{GS99c} is a $(S_B,n-1)$-TA strictly stabilizing protocol for maximum metric spanning tree construction with respect to $\mathcal{M}$.
\end{theorem}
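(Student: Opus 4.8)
The plan is to decompose the claim into a \emph{convergence} part and a \emph{closure} part. For convergence I would show that, given at most $n-1$ Byzantine processes (the root is correct, so this is no restriction), every execution reaches a configuration in which each $S_B$-correct process $v$ has $level_v=\mu(v,r)$ and $prnt_v$ pointing to a neighbour through which a maximum metric path of $S$ from $v$ to $r$ passes; tracing $prnt$-pointers from such a $v$ then exhibits an $\mathcal{M}$-path ending at $v$, so $spec(v)$ holds. For closure I would show that from any such configuration no $S_B$-correct process ever changes an O-variable, even while Byzantine processes keep acting; that configuration is then $(S_B,n-1)$-topology-aware contained (Definition~\ref{def:SfTAcontained}), which is what Definition~\ref{def:SfTAStrictstabilizing} requires. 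The single fact driving everything is that a Byzantine process is metric-wise no stronger than a counterfeit root: whatever value $b\in B$ advertises lies in $M$ and is therefore $\preceq mr$, so bounding at $b$ the metric of any rooted path of $v$ passing through $b$ by $mr$ and using monotonicity shows this path has metric at $v$ at most $\mu(v,b)$. Writing $\beta(u)=\max_\prec\bigl(\{\mu(u,r)\}\cup\{\mu(u,b)\mid b\in B\}\bigr)$ (with $\mu(r,r):=mr$), the definition of $S_B$ says precisely that a correct $u$ is $S_B$-correct iff $\beta(u)=\mu(u,r)$ and, when $u\neq r$, this value strictly beats every $\mu(u,b)$.

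First I would prove a \emph{structural lemma}: if $v$ is $S_B$-correct, then every maximum metric path of $S$ from $v$ to $r$ consists only of $S_B$-correct processes. No internal node is Byzantine, since a rooted path of $v$ through $b\in B$ has metric at $v$ at most $\mu(v,b)\prec\mu(v,r)$ by the bound above, whereas a maximum metric path attains $\mu(v,r)$. No internal node $u$ lies in $S_B$ either: if $\mu(u,b_0)\succeq\mu(u,r)$ for some $b_0\in B$, splice the $v$-to-$u$ prefix of the maximum metric path with a maximum metric path from $u$ to $b_0$ --- cut at their first common vertex so that the concatenation stays simple --- and invoke boundedness and monotonicity to obtain a rooted path of $v$ through $b_0$ of metric at $v$ at least $\mu(v,r)$, contradicting $v\notin S_B$. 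Hence, as far as legitimate O-values are concerned, the $S_B$-correct processes form a Byzantine-free sub-system rooted at $r$, and one stratifies them into layers $L_0=\{r\},L_1,L_2,\ldots$ by decreasing metric value $mr=m^{(0)}\succ m^{(1)}\succ\cdots$, refining ties by hop length of a maximum metric path exactly as in the self-stabilization analysis of \cite{GS99c}. The stratification is well founded, and every $v\in L_k$ has a neighbour $u\in L_j$ with $j<k$ lying on one of its maximum metric paths.

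The convergence and closure proofs then proceed in three phases. (I) \emph{Draining}: a Byzantine-robust form of the potential argument of \cite{GS99c} shows that within finite time every correct process $u$ permanently satisfies $level_u\preceq\beta(u)$; indeed every value offered to $u$ by a Byzantine process $b$ is $\preceq\mu(u,b)\preceq\beta(u)$ and every value offered by the root is $\preceq\mu(u,r)\preceq\beta(u)$, so an over-estimate at $u$ can only be fed by another still-unbounded correct process, and boundedness forbids such over-estimates from persisting around a cycle. (II) \emph{Build-up}: the root stabilizes after at most one of its actions (it is never Byzantine and its guard stays false afterwards); then induct on $k$ --- once every layer below $k$ is permanently at its correct values, an $S_B$-correct $v\in L_k$, which by phase~(I) already has $level_v\preceq\mu(v,r)$ forever, has its next action set $level_v$ to exactly $\mu(v,r)$ and $prnt_v$ to a realizing neighbour, since every neighbour of $v$ offers $\preceq\mu(v,r)$ (using phase~(I) for the $S_B$-correct neighbours not yet stabilized and for the correct neighbours outside $S_B$, and the bound above for the Byzantine neighbours) while its lower-layer maximum-metric-path neighbour offers exactly $\mu(v,r)$; thereafter $v$ is permanently correct. (III) \emph{Closure}: in the configuration reached at the end of phase~(II), for every $S_B$-correct $v$ the $\prec$-maximum of the values offered by its neighbours equals $\mu(v,r)$ and is attained by $prnt_v$; these bounds rest only on ``every $b\in B$ advertises $\preceq mr$'' and ``every correct process stays $\preceq\beta(\cdot)$'', which hold forever from then on, so $v$'s guard stays false and $v$ is frozen even under Byzantine actions --- and the same layered reasoning freezes the whole $S_B$-correct set. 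Every configuration from that point on is $S_B$-legitimate and $S_B$-stable, i.e. $(S_B,n-1)$-topology-aware contained.

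The main obstacle I expect is phase (I), more precisely the assertion that among the still-unbounded correct processes no cyclic chain of mutually-justifying, equal-and-too-high levels can survive: this is the delicate point of the self-stabilization proof of \cite{GS99c}, and it is exactly where the characterization of maximizable metrics as the bounded and monotonic ones \cite{GS03j} is genuinely needed. The Byzantine processes, by contrast, introduce no new difficulty on the $S_B$-correct set --- their influence there is uniformly dominated by the true root, which is the content of the structural lemma --- so the rest of the work is to verify carefully that the \cite{GS99c} machinery, applied to the restriction of the system to the $S_B$-correct processes and with each Byzantine process treated as an extra, only-ever-underperforming root, goes through with $\beta(\cdot)$ in the role of the fault-free legitimate value.
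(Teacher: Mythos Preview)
The paper does not contain a proof of this theorem. It appears in Section~3.2 (``Previous results'') as a citation of \cite{DMT10ca}, with no argument given; the present paper's original contributions are Theorems~\ref{th:necessarConditionStrong} and~\ref{th:impTAstrong}. So there is no in-paper proof to compare your proposal against.

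That said, your outline is the expected shape of the argument in \cite{DMT10ca}: a closure/convergence split, the key structural observation that any maximum metric path from an $S_B$-correct process to $r$ stays inside the $S_B$-correct set (so that, on that set, Byzantine processes are strictly dominated by the real root), and a layered induction over decreasing metric values built on top of the fault-free analysis of \cite{GS99c}. Your identification of Phase~(I) --- ruling out persistent cycles of mutually-justifying over-estimates among correct processes --- as the delicate step is accurate; this is precisely where the \cite{GS99c} protocol's specific mechanism (and not just the abstract boundedness/monotonicity of maximizable metrics) must be invoked, and your sketch currently treats it as a black box. One small point to tighten in the structural lemma: when you splice the $v$--$u$ prefix with a maximum metric path from $u$ to $b_0$, you implicitly use that the metric at $u$ along the original maximum $v$--$r$ path equals $\mu(u,r)$ (subpath optimality); this holds for maximizable metrics but should be stated, since it is what lets you compare $\mu(u,r)$ with $\mu(u,b_0)$ at the splice point.
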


Some others works try to circumvent the impossibility result of strict stabilization using the concept ot strong stabilization but do not provide results for any maximizable metric. Indeed, \cite{DMT11j} proves the following result about spanning tree.

\begin{theorem}[\cite{DMT11j}]
There exists a $(t,0,n-1)$-strongly stabilizing protocol for maximum metric spanning tree construction with respect to $\mathcal{NC}$ (that is, for a spanning tree with no particular constraints) with a finite $t$.
\end{theorem}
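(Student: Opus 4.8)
The plan is to make the statement constructive: exhibit an explicit self-stabilizing forest protocol for $\mathcal{NC}$ and then verify, clause by clause, the definition of a $(t,0,n-1)$-strongly stabilizing protocol. I would first pin down what ``legitimate'' means here. For $\mathcal{NC}$ the metric set is the singleton $\{0\}$, so $level_v$ is permanently $0$ and the only O-variable that can ever change is the pointer $prnt_v$; a configuration is $0$-legitimate exactly when the parent pointers of the correct processes describe a forest each of whose trees is rooted at $r$ or at a Byzantine process, and $0$-stability just says no correct process rewrites $prnt_v$ while the Byzantine processes stay idle. The decisive feature of $\mathcal{NC}$ is that, all metric values being equal, a Byzantine process has no \emph{metric leverage}: it can never advertise something a correct process would regard as strictly preferable to what it already has. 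I would design the protocol to exploit exactly this. Besides $prnt_v$, each process keeps a private (non-output) hop-estimate $d_v\in\{0,\dots,n\}$, with $n$ meaning ``no root reachable''; the root keeps $prnt_r=\bot$ and $d_r=0$; a non-root process is \emph{consistent} when $prnt_v=u\in N_v$ and $d_u=d_v-1$, and its guard fires only when it is inconsistent, in which case it first tries to stay put by recomputing $d_v:=d_{prnt_v}+1$, and it actually reassigns $prnt_v$ only if that is impossible (that is, if $prnt_v=\bot$ or $d_{prnt_v}\ge n-1$), adopting then a neighbour of minimum finite estimate, ties broken by port label. This conservative guard is what makes a correct process touch its O-variable only when it is genuinely cut off from every root.

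Second, I would establish convergence, which yields the stabilization time $l$. The core fact is that no consistent cycle can exist, since traversing a cycle of length $k$ along consistent edges would force $0=-k$; hence, from any configuration and with any number of Byzantine processes, within $O(n)$ rounds every correct process is consistent, and a consistent set of parent pointers is necessarily a forest rooted at nodes of estimate $0$, i.e.\ at $r$ or at a Byzantine process mimicking a root, which is admissible for the relaxed (forest) specification. This is essentially the analysis of the protocol of \cite{GS99c} specialised to $\mathcal{NC}$; note that the first consistent configuration reached is $0$-legitimate and $0$-stable and will be the candidate time-contained configuration $\rho_0$.

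Third — and this is the heart — I would bound the number of $0$-disruptions after $\rho_0$, and with it the per-process count $k$ and the number $t$. After $\rho_0$ a correct process is forced to rewrite $prnt_v$ only by one of two events: its current (necessarily Byzantine) parent pushes its estimate into the band $\{n-1,n\}$, or the process was momentarily cut off and now sees a neighbour with a finite estimate. I would argue by induction on the hop-distance $\delta(v)$ from $v$ to the nearest genuine root through correct processes that every correct process undergoes only finitely many such events: processes with small $\delta(v)$ latch onto the part of the forest anchored at $r$ and — here the absence of metric leverage is used — no neighbour can ever advertise a finite estimate that would lure $v$ away, so that part of the forest is frozen, and the inductive step propagates this stability outward. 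The finitely many Byzantine processes can still shake a correct neighbour loose, but the conservative guard caps the resulting parent changes by a constant before the neighbour settles permanently onto a non-Byzantine anchor (or, if it has none, keeps pointing at one of its Byzantine neighbours, which the relaxed specification allows). Summing over the at most $n$ correct processes yields a finite $k$ and hence a finite $t$, and the same accounting shows the O-variables of correct processes eventually freeze, which is clause~2 of the time-contained definition.

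I expect the bound on those two kinds of events to be the real obstacle. A Byzantine process can alternate forever between a small estimate and the ``no root'' band, and one must rule out that this drives a correct process — and through it a whole correct subtree — through an unbounded chain of parent changes that keeps re-entering and re-leaving legitimate configurations. Making this precise calls for a well-founded ranking of the legitimate configurations reachable from $\rho_0$ — essentially the lexicographic vector of the estimates $d_v$ over the correct processes not adjacent to any Byzantine process, controlled by the conservative update discipline — together with a proof that every disruption strictly lowers this rank while it stays bounded below. Identifying the right ranking, and in particular handling the correct processes that sit on the boundary between the root's region and a Byzantine process's region, is the delicate part of the argument.
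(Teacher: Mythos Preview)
The paper does not prove this statement: it is quoted verbatim as a prior result from \cite{DMT11j} in the ``Previous results'' subsection and is given no proof here. So there is no proof in this paper to compare your proposal against; the theorem functions purely as motivation for the paper's own necessary condition (Theorem~\ref{th:necessarConditionStrong}).

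As to your sketch on its own merits: the high-level architecture---separate the O-variable $prnt_v$ from a private hop-counter $d_v$, let a correct process rewrite $prnt_v$ only when its current parent becomes unusable, and then bound the number of such rewrites by a ranking argument---is plausible and is indeed the spirit of the construction in \cite{DMT11j}. But the ``heart'' you flag is genuinely unresolved in what you wrote. A Byzantine process $b$ can oscillate $d_b$ between $0$ and $n$ forever; each time it does so, a correct neighbour $v$ whose only path to $r$ goes through $b$'s region will first lose its parent (when $d_b$ jumps to $n$) and then regain one (when $d_b$ drops to $0$), and your conservative guard as stated does not obviously prevent $v$ from re-adopting $b$ each time. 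Your proposed ranking---the lexicographic vector of $d_v$ over correct processes not adjacent to a Byzantine---need not decrease under such an oscillation, since the processes that move are precisely the boundary ones you exclude, and their moves can in turn perturb interior processes without forcing the vector down. To close this you would need either a stronger guard (e.g.\ one that forbids re-adopting a parent recently abandoned, or that prefers non-Byzantine anchors once one has been seen) or a ranking that explicitly charges each disruption to a Byzantine action in a way that cannot be replenished. Without one of these, the finiteness of $t$ is asserted rather than proved.
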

 
On the other hand, regarding BFS spanning tree construction, \cite{DMT10cd} proved the following impossibility result.

\begin{theorem}[\cite{DMT10cd}]
Even under the central daemon, there exists no $(t,c,1)$-strongly stabilizing protocol for maximum metric spanning tree construction with respect to $\mathcal{BFS}$ where $t$ and $c$ are two finite integers.
\end{theorem}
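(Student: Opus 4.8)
The plan is to assume, for contradiction, that a $(t,c,1)$-strongly stabilizing protocol $P$ for $\mathcal{BFS}$ exists with finite $t$ and $c$, and to exhibit one network, one Byzantine process, and a single central-daemon execution in which a fixed $c$-correct process changes an O-variable infinitely often, contradicting the bound on $c$-disruptions. The phenomenon to exploit: for $\mathcal{BFS}$ a legitimate level is a hop-distance to the nearest root of the forest, whereas a Byzantine process may advertise \emph{any} level and so pretend to be arbitrarily close to, or arbitrarily far from, a root, thereby controlling whether nearby correct processes route through it. Concretely, take the path $x_0 - x_1 - \cdots - x_{2c+3}$ with unit weights, root $r=x_0$, Byzantine process $b=x_{2c+3}$, and target $v=x_{c+2}$: then $d(v,b)=c+1>c$ (so $v$ is $c$-correct) while $d(v,r)=c+2>d(v,b)$. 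I use two behaviours of $b$: in \emph{phase $A$}, $b$ permanently advertises $prnt_b=\bot$, $level_b=0$ (it simulates a root); in \emph{phase $B$}, $b$ permanently advertises a fixed non-zero level (e.g.\ a value exceeding $2c+3$) with an arbitrary parent.

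The key step is to show that $v$'s O-variables are forced, and to incompatible values, in the two phases. In phase $B$, in any $c$-legitimate configuration $spec(v)$ requires an $\mathcal{M}$-path ending at $v$; since $b$ advertises a non-zero level it cannot be the origin $v_0$ of such a path (clause 1 requires $level_{v_0}=0$), so the path originates at $r$, whence on the (unique) rooted path of the chain it has $level_v=\mu(v,r)=c+2$ and $prnt_v=x_{c+1}$. In phase $A$, $spec(v)$ admits an $\mathcal{M}$-path originating either at $r$ (forcing $level_v=c+2$) or at $b$ (forcing $level_v=c+1$); but a configuration that is $c$-legitimate \emph{and $c$-stable} cannot realise the first option: the local-optimality clause of the $\mathcal{M}$-path at $v$ would force $level_{x_{c+3}}\ge c+1$, so $x_{c+3}$, which lies at distance exactly $c$ from $b$, hence is not $c$-correct and is unprotected, would still be enabled to improve toward its level-$0$ neighbour $b$, and once it did, $v$ itself would become enabled to switch onto $x_{c+3}$, contradicting $c$-stability. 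Hence every time-contained configuration reached in phase $A$ has $prnt_v=x_{c+3}$, $level_v=c+1$, different from phase $B$.

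I would then build the execution under the central daemon: starting from a legitimate configuration, alternate phases $A$ and $B$ forever, holding $b$'s advertised state fixed within each phase and letting the correct processes move one at a time until $P$ reaches a time-contained configuration, which strong stabilization guarantees within $l$ rounds, so each phase is finite. By the previous paragraph, each phase switch forces the $c$-correct process $v$ to change an O-variable between two $c$-legitimate $c$-stable configurations, hence contributes a $c$-disruption; the execution thus contains infinitely many $c$-disruptions and never freezes $v$'s O-variables. This contradicts $(t,c,1)$-strong stabilization, which would provide a $(t,k,c,1)$-time contained configuration beyond which $v$ incurs at most $k$ O-variable changes and at most $t$ $c$-disruptions total, while infinitely many phase switches still lie beyond it.

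I expect the phase-$A$ argument to be the main obstacle: showing that $b$, merely by impersonating a root, necessarily drives $v$ back to routing through it in \emph{every} time-contained configuration. This step must combine $c$-stability with the local-optimality clause of the $\mathcal{M}$-path definition and with the fact that the separating node $x_{c+3}$ lies outside the containment radius (so its moves cascade onto $v$), and it relies on $P$ being a correct maximum-metric-forest protocol, hence leaving a process enabled whenever a neighbour offers a strictly better metric. Everything else is bookkeeping with hop-distances on a path and with the definitions of $c$-disruption and time-contained configuration.
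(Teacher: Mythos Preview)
Your setup (the path $x_0,\dots,x_{2c+3}$, root $r=x_0$, Byzantine $b=x_{2c+3}$, target $v=x_{c+2}$, alternating two Byzantine behaviours) matches the paper's, and your phase-$B$ analysis is fine. The gap is in phase~$A$, exactly where you flagged it. You argue that if a $c$-legitimate, $c$-stable configuration had $prnt_v=x_{c+1}$, $level_v=c+2$, then local optimality forces $level_{x_{c+3}}\ge c+1$, whence $x_{c+3}$ ``would still be enabled to improve toward its level-$0$ neighbour $b$''. But $b$ is not a neighbour of $x_{c+3}$ (it is at distance $c$), and more importantly you are assuming that the unknown protocol $\mathcal{P}$ enables a process whenever a neighbour offers a strictly better metric. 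Nothing in the definition of strong stabilization guarantees this: $\mathcal{P}$ is a black box, and it could leave $x_{c+3},\dots,x_{2c+2}$ (none of them $c$-correct, hence unconstrained by $c$-legitimacy) disabled with $level_{x_{c+3}}\ge c+1$, so that $spec(v)$ holds with $v$ pointing toward $r$ and no correct process ever moves. Your two phases would then never force $v$ to change its O-variables.

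The paper's proof (given here for the generalization to arbitrary maximizable metrics) closes this gap with two different devices. For the ``$b$ as root'' phase it uses \emph{symmetry}: $b$'s hidden variables are initialised equal to $r$'s and $b$ mirrors every action of $r$, so the adversary can drive an execution that is symmetric under $p_i\leftrightarrow p_{2c+3-i}$; the converged configuration is then itself symmetric, which forces $prnt_{p_{c+2}}=p_{c+3}$ once one knows $prnt_{p_{c+1}}=p_c$. For the other phase it does not fix a high Byzantine level but has $b$ execute $\mathcal{P}$ \emph{correctly}; the system is then fault-free, and ordinary self-stabilization (implied by strong stabilization) drives it to the unique legitimate tree with $prnt_{p_{c+2}}=p_{c+1}$. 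Both steps invoke only properties guaranteed by the definitions, not assumed features of $\mathcal{P}$'s guards.
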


These two results motivate our result related to strong stabilization in the general case (see Section 4.1) that proves a necessary condition on the maximizable metric to allow strong stabilization.

Now, if we focus on topology-aware strong stabilization, \cite{DMT10cd} proved the following results.

\begin{theorem}[\cite{DMT10cd}]\label{th:impTAStrongBFS}
Even under the central daemon, there exists no $(t,A_B^*,1)$-TA strongly stabilizing protocol for maximum metric spanning tree construction with respect to $\mathcal{BFS}$ where $A_B^*\varsubsetneq \{v\in V|\underset{b\in B}{min}(d(v,b))<d(r,v)\}$ and $t$ is a finite integer.
\end{theorem}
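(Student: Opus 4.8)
I would argue by contradiction, exploiting that for $\mathcal{BFS}$ a Byzantine process can costlessly advertise itself as a second root and thereby offer a strictly shorter route to a well-chosen victim. Assume $A$ is $(t,A_B^*,1)$-TA strongly stabilizing for $\mathcal{BFS}$ with $A_B^*\varsubsetneq X_B:=\{v\in V\mid\min_{b\in B}d(v,b)<d(r,v)\}$ for some system $S$ and a single Byzantine process $b$ ($X_B$ understood, as for $S_B$ in the framework, modulo $B$ and the root). Pick $v^*\in X_B\setminus A_B^*$ minimizing $d(b,v^*)$; it is $A_B^*$-correct, so by definition every execution contains a $(t,k,A_B^*,1)$-TA time contained configuration, past which $v^*$ never changes its O-variables. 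I would build, using only the central daemon, an execution along which $v^*$ changes its O-variables infinitely often, which forbids any such configuration. Since that execution schedules one process per step, the ``even under the central daemon'' clause is immediate.

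The driving fact is $d(b,v^*)<d(r,v^*)$. When $b$ advertises $level_b=0,\ prnt_b=\bot$ it is indistinguishable from a genuine root, so in an $A_B^*$-legitimate configuration $level_{v^*}$ is pinned to $d(b,v^*)$ (through an $\mathcal{M}$-path rooted at $b$); when $b$ instead advertises a level $m$ with $met(m,1)$ worse than $d(r,v^*)$, no $\mathcal{M}$-path through $b$ survives and $level_{v^*}$ is pinned to $d(r,v^*)$; the two values differ. This is already fully visible on the length-$3$ witness $r-y-v^*-b$, where $X_B=\{v^*\}$ forces $A_B^*=\emptyset$ and $v^*$ is a neighbor of $b$. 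There, starting from the $A_B^*$-legitimate, $A_B^*$-stable configuration $\rho_2$ in which $r,y,v^*$ have $(level,prnt)=(0,\bot),(1,r),(2,y)$ and $b$ advertises level $\ge 2$: one Byzantine move $level_b:=0,prnt_b:=\bot$ leaves $v^*$ enabled (and the configuration not $A_B^*$-legitimate --- local optimality in the $\mathcal{M}$-path fails at $v^*$), and one move of $v^*$ yields the $A_B^*$-legitimate, $A_B^*$-stable configuration $\rho_1$ with $(level_{v^*},prnt_{v^*})=(1,b)$; symmetrically, one Byzantine move restoring $level_b:=2$ leaves $v^*$ unable to keep $level_{v^*}=1$, hence enabled, and one move of $v^*$ restores $\rho_2$. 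Iterating the schedule $\{b\},\{v^*\},\{b\},\{v^*\},\dots$ makes $v^*$ alternate forever, so no configuration of this execution is time contained --- contradiction.

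For an arbitrary system with $A_B^*\varsubsetneq X_B$ the skeleton is the same, with one extra ingredient: because $v^*$ minimizes $d(b,v^*)$ over $X_B\setminus A_B^*$, every internal vertex of a shortest path from $b$ to $v^*$ lies in $A_B^*$, hence is unconstrained, so the central daemon may first propagate the labels $1,2,\dots$ (distances to $b$) along that path --- the best offer at the $i$-th such vertex is exactly $i$ since its distance to $r$ exceeds $i$ --- before moving $v^*$; dually, after $b$ reverts, the daemon continues fairly and TA strong stabilization of the $A_B^*$-correct process $v^*$ forces an $A_B^*$-legitimate configuration in which $level_{v^*}=d(r,v^*)$ again, since $b$ is then distant.

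The idea is short; the work is the bookkeeping. The main obstacle I anticipate is justifying the two pinning claims --- that an $A_B^*$-legitimate configuration with $b$ behaving as a root forces $level_{v^*}=d(b,v^*)$, and with $b$ distant forces $level_{v^*}=d(r,v^*)$ --- which is exactly where boundedness and monotonicity of $\mathcal{BFS}$, the fact that along a shortest path the only locally optimal level is the distance label, and the strict inequality $d(b,v^*)<d(r,v^*)$ all enter, together with getting the frontier choice of $v^*$ and the status of the intermediate vertices right. The alternation argument and the clash with the definition of a time contained configuration are then routine.
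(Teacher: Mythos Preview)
The paper does not prove this statement itself---it is quoted from \cite{DMT10cd}---but it proves the generalization Theorem~\ref{th:impTAstrong}, whose Case~2.2 specializes to $\mathcal{BFS}$. Your core idea is the same as that proof: exhibit a small symmetric system in which the Byzantine process alternately impersonates the root and then withdraws, forcing an $A_B^*$-correct process to change its O-variables in every cycle. Your four-node path $r\text{--}y\text{--}v^*\text{--}b$ is a leaner witness than the paper's six-node double branch $r\text{--}u\text{--}v\text{--}b$, $r\text{--}u'\text{--}v'\text{--}b$; for $\mathcal{BFS}$ the path already gives $S_B^*=\{v^*\}$, hence $A_B^*=\emptyset$, so your simplification is sound. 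Having $b$ statically advertise a large level in the ``withdraw'' phase, rather than run the protocol correctly as the paper does, is also fine: TA strong stabilization still forces convergence to an $A_B^*$-legitimate configuration in which $level_{v^*}=d(r,v^*)$.

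The real gap is your explicit schedule $\{b\},\{v^*\},\{b\},\{v^*\},\dots$ and the claim that ``one move of $v^*$'' reaches the next legitimate configuration. You do not know the protocol $\mathcal{P}$: it may maintain auxiliary variables, route $v^*$ through intermediate states, or require $y$ to act first; nothing guarantees a single step of $v^*$ suffices, nor even that $v^*$ is enabled immediately. The paper sidesteps this entirely by arguing only at the level of convergence: after $b$ sets \emph{all} its variables equal to $r$'s and henceforth mimics $r$'s actions, symmetry of the execution plus self-stabilization drive the system to the unique symmetric legitimate configuration; after $b$ behaves correctly, fault-free self-stabilization drives it to the unique root-only legitimate configuration; comparing the two shows the $A_B^*$-correct process changed its O-variables. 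You should replace the step-counting by this convergence argument (your third paragraph already uses it, and your closing paragraph correctly flags the ``pinning claims'' as the crux). Note also that ``$level_b:=0,\ prnt_b:=\bot$'' alone does not make $b$ indistinguishable from $r$; you need $b$ to copy $r$'s non-output variables as well, which is precisely where the $r\leftrightarrow b$ symmetry of your path is used.

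Finally, the attempt to treat an arbitrary system is unnecessary and shaky: that the internal vertices of a shortest $b$--$v^*$ path lie in $A_B^*$ does not let the daemon ``propagate labels'' as it pleases, since the unknown protocol still dictates which actions are enabled. A single counterexample system suffices, as in the paper.
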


\begin{theorem}[\cite{DMT10cd}]
The protocol of \cite{HC92j} is a $(t,S_B^*,n-1)$-TA strongly stabilizing protocol for maximum metric spanning tree construction with respect to $\mathcal{BFS}$ where $t$ is a finite integer and $S_B^*=\{v\in V|\underset{b\in B}{min}(d(v,b))<d(r,v)\}$.
\end{theorem}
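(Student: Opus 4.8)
The plan is to verify the two defining requirements of a $(t,S_B^*,n-1)$-TA strongly stabilizing protocol for the Huang--Chen protocol executed with the $\mathcal{BFS}$ metric: that from any configuration, with at most $n-1$ Byzantine processes, every execution reaches within finitely many actions of each $S_B^*$-correct node a $(t,k,S_B^*,n-1)$-TA time contained configuration, and that such a configuration enjoys the four properties of that definition with $t$ and $k$ finite. Since for $\mathcal{BFS}$ one has $\mu(v,r)=d(r,v)$ and $\mu(v,b)=d(v,b)$, the $S_B^*$-correct nodes are exactly $r$ together with the correct nodes $v$ satisfying $d(r,v)\le \min_{b\in B}d(v,b)$; call such nodes \emph{safe}. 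I will use only two features of the Huang--Chen update rule: a non-root node never ends an action with a level strictly smaller than $1+\min_{u\in N_v}level_u$, and a node keeps $prnt_v$ as long as $level_{prnt_v}=level_v-1$ still holds.

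First I would prove the structural fact that makes $S_B^*$ the right containment area: the set of safe nodes is closed under subpaths of shortest paths from $r$. If $v$ is safe and $u$ lies on a shortest $r$--$v$ path, then for every $b\in B$, $d(r,u)+d(u,v)=d(r,v)\le d(v,b)\le d(u,b)+d(u,v)$, hence $d(r,u)\le d(u,b)$ and $u$ is safe (and $u\notin B$, as the root is not Byzantine). In particular every safe node $v\neq r$ has a neighbor, on a shortest $r$--$v$ path, that is $r$ or safe and lies at distance $d(r,v)-1$; this ``backbone'' neighbor carries the correct $\mathcal{BFS}$ level down to the root without ever entering the containment area. Next I would establish an eventual level lower bound: from any configuration, after a bounded number of actions of each process every execution is in a configuration after which every correct node $v$ permanently has $level_v\ge \ell^*(v):=\min\bigl(d(r,v),\min_{b\in B}d(v,b)\bigr)$. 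This follows by induction on $\ell^*(v)$: a correct node can be assigned a level below $\ell^*(v)$ only through a neighbor $u$ whose level is below $\ell^*(v)-1\le \ell^*(u)-1$, and unwinding this chain contradicts the impossibility of a negative level at $r$; once the bound holds at every neighbor, the first feature of the Huang--Chen rule preserves it. Note $\ell^*(v)=d(r,v)$ precisely when $v$ is safe.

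The core step is an induction on $d(r,v)$ showing that every safe node eventually freezes both of its O-variables; this yields $S_B^*$-legitimacy and $S_B^*$-stability of the limiting configuration and, with the quantitative refinement below, the finiteness of $k$, $t$, and the stabilization measure $l$. The root freezes at $(prnt_r,level_r)=(\bot,0)$ after at most one action. Assuming every safe node at distance $<\ell$ has frozen at its correct level with a consistent parent, let $v$ be safe with $d(r,v)=\ell$ and backbone neighbor $u$, permanently at level $\ell-1$. By the presence of $u$ the rule never assigns $v$ a level above $\ell$; and any neighbor $w$ that could offer a level below $\ell-1$ would violate $level_w\ge \ell^*(w)\ge \ell-1$, which is true because if $w$ is safe then $d(r,w)\ge \ell-1$ by adjacency, and if $w\in S_B^*$ then $\min_{b}d(w,b)\ge \min_{b}d(v,b)-1\ge d(r,v)-1=\ell-1$ since $v$ is safe. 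Hence $level_v$ stabilizes at $d(r,v)=\ell^*(v)$, after which $prnt_v$ is forced to be a neighbor permanently at level $\ell-1$; using the second feature of the rule, once $v$ selects such a stable neighbor (for instance $u$) it never changes again. Finally, one checks that the frozen $prnt$-chain of a safe node $v$ decreases levels by one at each step and therefore reaches a node at level $0$, which by the lower-bound lemma is $r$ or a Byzantine process presenting level $0$; the $\mathcal{BFS}$ conditions along this chain hold because $v$ is safe (in the Byzantine case one uses $d(r,v)\le d(v,b)\le(\text{chain length})=level_v$, forcing equality), so the chain is an $\mathcal{M}$-path and $spec(v)$ holds.

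The main obstacle is the quantitative content of the core step: bounding the number of times a safe node re-selects its parent, and globally the number $t$ of $S_B^*$-TA-disruptions, because a neighbor $w\in S_B^*\setminus B$ can be driven by nearby Byzantine processes to oscillate its level between $\ell-1$ and larger values forever, which may repeatedly invalidate $v$'s current parent. The plan to control this is a frontier argument: track the largest radius $\rho$ such that the current configuration already coincides, on all safe nodes within distance $\rho$ of $r$, with the final frozen configuration; the lower-bound lemma rules out regressions of this frontier, and one shows that each Byzantine-induced change of a safe node permanently advances the frontier after a bounded amount of additional work, so only finitely many such changes ever occur. This makes $t$, the per-process bound $k$, and the stabilization bound $l$ explicit functions of $n$ and $|B|$, and combining it with the three preceding steps gives the claimed $(t,S_B^*,n-1)$-TA strong stabilization of the Huang--Chen protocol.
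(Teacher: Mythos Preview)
The paper does not contain a proof of this theorem. It appears in Section~3.2 (``Previous results'') as a result quoted from \cite{DMT10cd}, alongside several other cited theorems, and no argument is given here; the paper's own contributions are the impossibility results of Section~4. So there is no proof in this paper to compare your proposal against.

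That said, a brief remark on your sketch. The qualitative part is sound: closure of safe nodes along shortest paths to $r$, the eventual lower bound $level_v\ge\ell^*(v)$, and the layer-by-layer freezing of $level_v$ at $d(r,v)$ for safe $v$ are the right ingredients, and your verification that the resulting $prnt$-chain is an $\mathcal M$-path is correct. The weak point is exactly the one you flag yourself. Your ``frontier'' argument asserts that each Byzantine-induced O-variable change at a safe node permanently advances the frontier, but this does not follow from what you have established: a safe node $v$ at distance $\ell$ may have several neighbors that eventually sit at level $\ell-1$, including an unsafe neighbor $w$ whose level can be pushed above $\ell-1$ and back down infinitely often. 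With only the two protocol features you invoke, nothing prevents $v$ from repeatedly re-selecting $w$ after each oscillation, so neither the per-process bound $k$ nor the disruption count $t$ is yet finite. To close this you need a further property of the Huang--Chen rule (for instance, a deterministic or stable tie-breaking among neighbors at minimal level, or an argument that once $v$ selects a \emph{frozen} neighbor it can never be dislodged), and the frontier monotonicity must be argued from that, not from the lower-bound lemma alone.
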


In the following, we generalize the Theorem \ref{th:impTAStrongBFS} to any maximizable metric (see Section 4.2).

\section{Necessary conditions}

In this section, we provide our necessary conditions about containment radius (respectively area) of any strongly stabilizing (respectively TA strongly stabilizing) protocol for the maximum metric tree construction.

\subsection{Strong Stabilization}

We introduce here some new definitions to characterize some important properties of maximizable metrics that are used in the following.

\begin{definition}[Strictly decreasing metric]
A metric $\mathcal{M}=(M,W,mr,met,\prec)$ is \emph{strictly decreasing} if, for any metric value $m\in M$, the following property holds: either $\forall w\in W,met(m,w)\prec m$ or $\forall w\in W,met(m,w)=m$.
\end{definition}

\begin{definition}[Fixed point]
A metric value $m$ is a \emph{fixed point} of a metric $\mathcal{M}=(M,W,mr,met,\prec)$ if $m\in M$ and if for any value $w\in W$, we have: $met(m,w)=m$.
\end{definition}

Then, we define a specific class of maximizable metrics and we prove that it is possible to construct a maximum metric tree in a strongly-stabilizing way only if we consider such a metric.

\begin{definition}[Strongly maximizable metric]
A maximizable metric $\mathcal{M}=(M,W,mr,met,\prec)$ is strongly maximizable if and only if $|M|=1$ or if the following properties holds: 
\begin{itemize}
\item $|M|\geq 2$,
\item $\mathcal{M}$ is strictly decreasing, and 
\item $\mathcal{M}$ has one and only one fixed point.
\end{itemize}
\end{definition}

Note that $\mathcal{NC}$ is a strongly maximizable metric (since $|M_4|=1$) whereas $\mathcal{BFS}$ or $\mathcal{SP}$ are not (since the first one has no fixed point, the second is not strictly decreasing). If we consider the metric $\mathcal{MET}$ defined below, we can show that $\mathcal{MET}$ is a strongly maximizable metric such that $|M|\geq 2$.

\[\begin{array}{rcl}
\mathcal{MET}&=&(M_5,W_5,met_5,mr_5,\prec_5)\\
\text{where}& & M_5=\{0,1,2,3\}\\
&& W_5=\{1\}\\
&& met_5(m,w)=max\{0,m-w\}\\
&& mr_5=3\\
&& \prec_5 \text{ is the classical }<\text{ relation}
\end{array}\]

Now, we can state our first necessary condition.

\begin{theorem}\label{th:necessarConditionStrong}
Given a maximizable metric $\mathcal{M}=(M,W,mr,met,\prec)$, even under the central daemon, there exists no $(t,c,1)$-strongly stabilizing protocol for maximum metric spanning tree construction with respect to $\mathcal{M}$ for any finite integer $t$ if:
\[\left\{\begin{array}{l}
\mathcal{M} \mbox{ is not a strongly maximizable metric}\\
\mbox{ or }\\
c<|M|-2
\end{array}\right.\]
\end{theorem}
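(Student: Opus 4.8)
## Proof Proposal

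The plan is to handle the two sufficient conditions by a single unified family of counterexample constructions, each built around a carefully chosen graph with one Byzantine process placed so that it can ``simulate'' a legitimate configuration that the true root should eventually override --- but can only override through a long chain of single-metric-value decrements. The key idea, as in the lower bounds of \cite{DMT10cd}, is to exhibit two configurations that are indistinguishable to a set of correct processes, each of which is $c$-legitimate and $c$-stable, and then show that a suitable Byzantine schedule forces the protocol to oscillate between them infinitely often, producing infinitely many $c$-disruptions and thus contradicting $(t,c,1)$-strong stabilization for every finite $t$.

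First I would treat the failure modes of strong maximizability separately. If $\mathcal{M}$ is not strictly decreasing, then there is a metric value $m$ and weights $w_1, w_2 \in W$ with $met(m,w_1) \prec m$ while $met(m,w_2) = m$; by the utility condition, $m$ is reachable from $mr$ along some weighted path. I would build a graph in which a Byzantine process $b$ sits at the end of such a path (so it can legitimately advertise $level = m$), with a correct ``victim'' subtree hanging off $b$ via an edge of weight $w_2$ (so the victim wants to copy $m$) and also a longer alternative route from the real root $r$ into the same victim region whose best achievable metric there is some $m' \prec m$. Because the metric is not strictly decreasing at $m$, the protocol cannot locally tell which of two globally-different situations it is in, and $b$ alternately ``abandons'' and ``reclaims'' its favorable value; each switch forces the victims (which can be made more than $c$ hops from $b$, by lengthening the weight-$w_2$ chain appropriately, using that a non-strictly-decreasing step is ``free'' under monotonicity/boundedness at $m$) to change O-variables. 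If instead $\mathcal{M}$ is strictly decreasing but has zero or at least two fixed points, a symmetric construction applies: with no fixed point, iterating $met$ from $mr$ strictly decreases forever, which (since $M$ may still be finite in the relevant quotient, or infinite) lets a Byzantine process pretend to be the root of an arbitrarily deep legitimate tree, forcing unbounded disruptions; with two distinct fixed points $f_1 \prec f_2$, a Byzantine process pinned at $f_2$ competes against the root's influence which decays toward $f_1$, again yielding an indistinguishability pair.

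For the quantitative condition $c < |M| - 2$, I would take $\mathcal{M}$ to be strongly maximizable with $|M| \geq 2$ (the interesting case, since otherwise the first condition already applies or $c<|M|-2$ is vacuous for $|M|=1$), so by strict decrease the non-fixed metric values form a chain $mr = m_0 \succ m_1 \succ \cdots \succ m_{|M|-2} = f$ down to the unique fixed point $f$, realizable along a path of weighted edges. I would place a Byzantine process $b$ adjacent to $r$ (advertising some $level_b \in M \setminus \{mr\}$ inconsistently) at one end of a path of length roughly $|M|-2$, with correct processes $v_1, \ldots, v_{|M|-2}$ along it, and arrange that $v_j$'s true maximum metric (via $r$) differs from what it would adopt if it believed $b$. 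The point is that the ``correction wave'' from the root must pass through all intermediate processes one metric level at a time, so a process at distance just under $|M|-2$ from $b$ can be made to change its O-variables each time $b$ flips; since $c < |M|-2$, this process is $c$-correct, contradicting the bound on $c$-disruptions.

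The main obstacle I anticipate is making the indistinguishability argument fully rigorous under the shared-state model with a \emph{central} daemon: I must exhibit, for each construction, two genuine configurations $\rho, \rho'$ that are each $c$-legitimate and $c$-stable, agree on the states of all $c$-correct processes (and their neighbors), yet are reached from one another by legal central-daemon steps driven only by $b$'s moves and by forced corrections of non-$c$-correct processes --- and then argue the protocol, being deterministic-per-guard but daemon-scheduled, \emph{must} eventually move some $c$-correct process's O-variable in at least one of the two threads, because the specification $spec$ distinguishes $\rho$ from $\rho'$ at a $c$-correct node. Threading this ``if it acts it's wrong in one world, if it never acts it violates $spec$ in that world'' dichotomy through the $\mathcal{M}$-path definition (especially clause 3, the local max-consistency condition) and confirming the chosen graph really does admit the claimed maximum metric values $\mu(v,r)$ versus $\mu(v,b)$ is where the care is needed; the rest is combinatorial bookkeeping about path lengths so that the perturbed victim lies strictly beyond radius $c$.
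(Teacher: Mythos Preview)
Your plan has the right endgame --- force infinitely many $c$-disruptions --- but the mechanism you propose does not match the paper's and, as written, has a real gap. The paper does \emph{not} use an indistinguishability argument between two $c$-legitimate configurations that agree on $c$-correct states. Instead it uses a \emph{symmetry-plus-self-stabilization} trick: build a path $p_0=r,\ldots,p_{2c+3}=b$ with weights mirrored around the midpoint, have $b$ copy $r$'s state exactly, and argue by symmetry that convergence must land in the unique configuration $\rho_1$ where the left half points to $r$ and the right half to $b$. Then $b$ switches to \emph{correct} behavior; since a strongly stabilizing protocol is in particular self-stabilizing, the system must now converge to the all-points-to-$r$ configuration $\rho_2$, and the $c$-correct process $p_{c+2}$ (distance $c+1$ from $b$) is forced to flip its parent pointer. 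Finally $b$ resets to the root-mimicking state and the cycle repeats. Your indistinguishability formulation is internally inconsistent: if $\rho$ and $\rho'$ agree on all $c$-correct processes and their neighbors, and the transitions between them touch only non-$c$-correct processes, then no $c$-correct process ever changes an O-variable and there is no $c$-disruption at all.

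Your construction for the $c<|M|-2$ case is also off. Placing $b$ \emph{adjacent to $r$} with the victim chain hanging beyond $b$ means every correct $v_j$'s only route to $r$ passes through $b$; then $\mu(v_j,r)$ and the value $v_j$ adopts ``believing $b$'' are computed along the same edges, and you cannot create the asymmetry you need. The paper's placement of $b$ at the \emph{far} end of a symmetric path of length $2c+3$ is essential: it makes $p_{c+1}$ and $p_{c+2}$ genuinely equidistant (in the metric sense) from $r$ and $b$, so that the symmetric configuration $\rho_1$ is legitimately $c$-legitimate, yet $p_{c+2}$ must move when $b$ stops lying. For the non-strictly-decreasing and multiple-fixed-point cases the paper reuses the same path template with modified weight patterns (inserting the ``flat'' weight $w$ or the fixed-point-reaching prefixes), rather than your asymmetric ``victim subtree plus alternative route'' graph; the symmetry is what makes the convergence-to-$\rho_1$ step provable without knowing anything about the protocol.
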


\begin{proof}
We prove this result by contradiction. We assume so that $\mathcal{M}=(M,W,mr,met,\prec)$ is a maximizable metric such that there exist a finite integer $t$ and a protocol $\mathcal{P}$ that is a $(t,c,1)$-strongly stabilizing protocol for maximum metric spanning tree construction with respect to $\mathcal{M}$. We distinguish the following cases (note that they are exhaustive):
\begin{description}
\item[Case 1:] $\mathcal{M}$ is a strongly maximizing metric and $c<|M|-2$.

As $c\geq 0$, we know that $|M|\geq 2$ and by definition of a strongly stabilizing metric, $\mathcal{M}$ is strictly decreasing, and $\mathcal{M}$ has one and only one fixed point.

\begin{figure}[t]
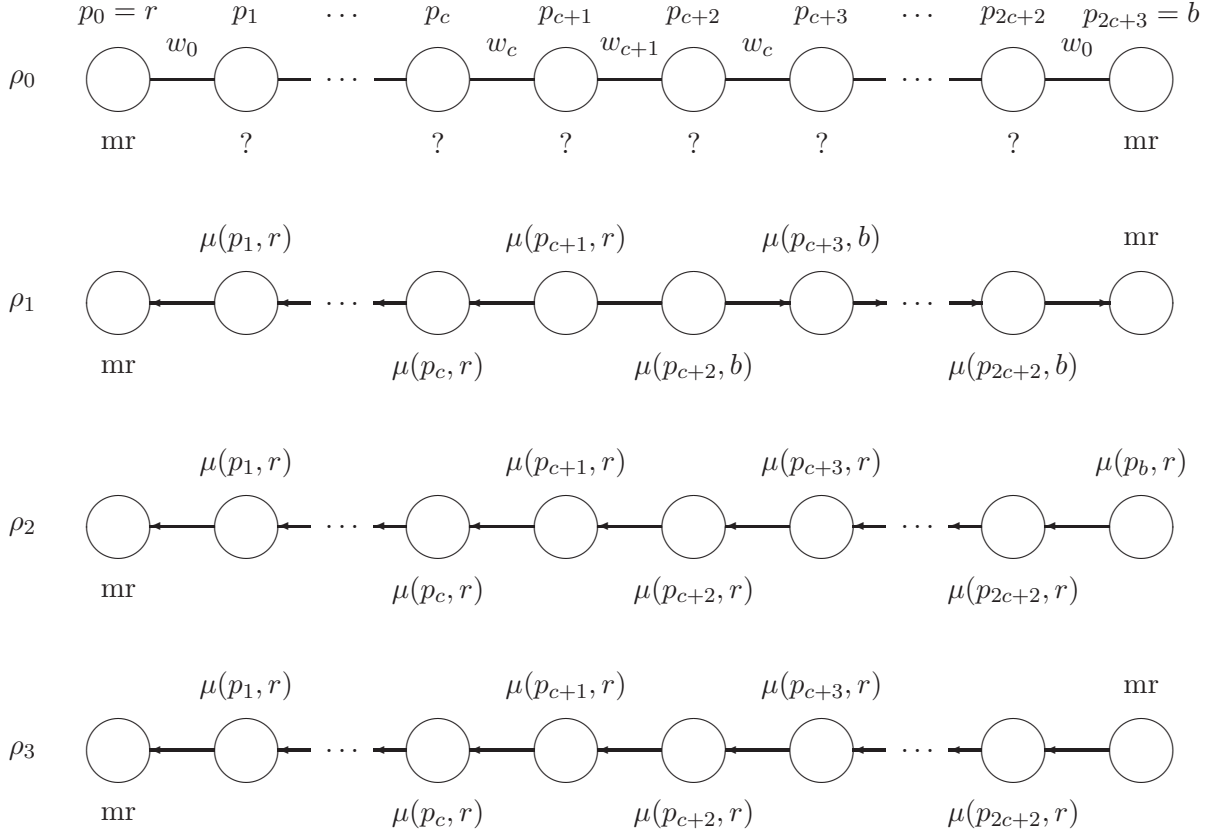

\noindent \begin{centering} \include{possStrongCase1}
  \par\end{centering}
 \caption{Configurations used in proof of Theorem \ref{th:necessarConditionStrong}, case 1.}
\label{fig:possStrongCase1}
\end{figure}

By assumption on $\mathcal{M}$, we know that there exist $c+3$ distinct metric values $m_0=mr,m_1,\ldots,$ $m_{c+2}$ in $M$ and $w_0,w_1,\ldots,w_{c+1}$ in $W$ such that: $\forall i\in\{1,\ldots,c+2\},m_i=met(m_{i-1},w_{i-1})\prec m_{i-1}$.

Let $S=(V,E,\mathcal{W})$ be the following weighted system $V=\{p_0=r,p_1,\ldots,p_{2c+2},p_{2c+3}=b\}$, $E=\{\{p_i,p_{i+1}\},i\in\{0,\ldots,2c+2\}\}$ and $\forall i\in\{0,c+1\},w_{p_i,p_{i+1}}=w_{p_{2c+3-i},p_{2c+2-i}}=w_i$. Note that the choice $w_{p_{c+1},p_{c+2}}=w_{c+1}$ ensures us the following property when $level_r=level_b=mr$: $\mu(p_{c+1},b)\prec\mu(p_{c+1},r)$ (and by symmetry, $\mu(p_{c+2},r)\prec\mu(p_{c+2},b)$). Process $p_0$ is the real root and process $b$ is a Byzantine one. Note that the construction of $\mathcal{W}$ ensures the following properties when $level_r=level_b=mr$: $\forall i\in\{1,\ldots,c+1\},\mu(p_i,r)=\mu(p_{2c+3-i},b)$, $\mu(p_i,b)\prec\mu(p_i,r)$ and $\mu(p_{2c+3-i},r)\prec\mu(p_{2c+3-i},b)$.

Assume that the initial configuration $\rho_0$ of $S$ satisfies: $prnt_r=prnt_b=\bot$, $level_r=level_b=mr$, and other variables of $b$ (if any) are identical to those of $r$ (see Figure \ref{fig:possStrongCase1}, variables of other processes may be arbitrary). Assume now that $b$ takes exactly the same actions as $r$ (if any) immediately after $r$. Then, by symmetry of the execution and by convergence of $\mathcal{P}$ to $spec$, we can deduce that the system reaches in a finite time a configuration $\rho_1$ (see Figure \ref{fig:possStrongCase1}) in which: $\forall i\in\{1,\ldots,c+1\}, prnt_{p_i}=p_{i-1}$, $level_{p_i}=\mu(p_i,r)=m_i$ and $\forall i\in\{c+2,\ldots,2c+2\},prnt_{p_i}=p_{i+1}$ and $level_{p_i}=\mu(p_{i},b)=m_{2c+3-i}$ (because this configuration is the only one in which all correct process $v$ satisfies $spec(v)$ when $prnt_r=prnt_b=\bot$ and $level_r=level_b=mr$ by construction of $\mathcal{W}$). Note that $\rho_1$ is $c$-legitimate and $c$-stable.

Assume now that the Byzantine process acts as a correct process and executes correctly its algorithm. Then, by convergence of $\mathcal{P}$ in fault-free systems (remember that a strongly-stabilizing algorithm is a special case of self-stabilizing algorithm), we can deduce that the system reach in a finite time a configuration $\rho_2$ (see Figure \ref{fig:possStrongCase1}) in which: $\forall i\in\{1,\ldots,2c+3\},prnt_{p_i}=p_{i-1}$ and $level_{p_i}=\mu(p_i,r)$ (because this configuration is the only one in which all process $v$ satisfies $spec(v)$). Note that the portion of execution between $\rho_1$ and $\rho_2$ contains at least one $c$-perturbation ($p_{c+2}$ is a $c$-correct process and modifies at least once its O-variables) and that $\rho_2$ is $c$-legitimate and $c$-stable.

Assume now that the Byzantine process $b$ takes the following state: $prnt_{b}=\bot$ and $level_b=mr$. This step brings the system into configuration $\rho_3$ (see Figure \ref{fig:possStrongCase1}). From this configuration, we can repeat the execution we constructed from $\rho_0$. By the same token, we obtain an execution of $\mathcal{P}$ which contains $c$-legitimate and $c$-stable configurations (see $\rho_1$) and an infinite number of $c$-perturbation which contradicts the $(t,c,1)$-strong stabilization of $\mathcal{P}$.

\item[Case 2:] $\mathcal{M}$ is not strictly decreasing.

\begin{figure}[t]
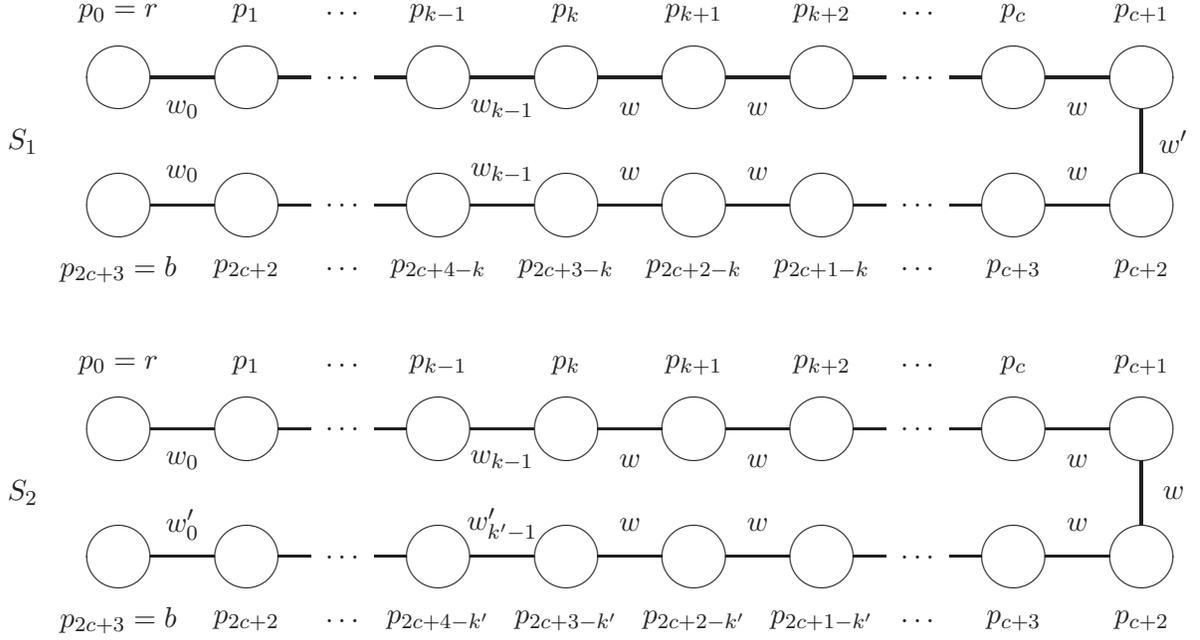

\noindent \begin{centering} \include{possStrongCase23}
  \par\end{centering}
 \caption{Configurations used in proof of Theorem \ref{th:necessarConditionStrong}, cases 2 and 3.}
\label{fig:possStrongCase23}
\end{figure}

By definition, we know that $\mathcal{M}$ is not a strongly maximizable metric. Hence, we have $|M|\geq 2$. Then, the definition of a strictly decreasing metric implies that there exists a metric value $m\in M$ such that: $\exists w\in W,$ $met(m,w)=m$ and $\exists w'\in W,m'=met(m,w')\prec m$ (and thus $m$ is not a fixed point of $\mathcal{M}$). By the utility condition on $M$, we know that there exists a sequence of metric values $m_0=mr,m_1,\ldots,m_l=m$ in $M$ and $w_0,w_1,\ldots,w_{l-1}$ in $W$ such that $\forall i\in\{1,\ldots,l\},m_i=met(m_{i-1},w_{i-1})$. Denote by $k$ the length of the shortest such sequence. Note that this implies that $\forall i\in\{1,\ldots,k\},m_i\prec m_{i-1}$ (otherwise we can remove $m_i$ from the sequence and this is contradictory with the construction of $k$). We distinguish the following cases:
\begin{description}
\item[Case 2.1:] $k\geq c+2$.\\
We can use the same token as case 1 above by using $w'$ instead of $w_{c+1}$ in the case where $k=c+2$ (since we know that $met(m,w')\prec m$).
\item[Case 2.2:] $k< c+2$.\\
Let $S_1=(V,E,\mathcal{W})$ be the following weighted system $V=\{p_0=r,p_1,\ldots,p_{2c+2},p_{2c+3}=b\}$, $E=\{\{p_i,p_{i+1}\},i\in\{0,\ldots,2c+2\}\}$, $\forall i\in\{0,\ldots,k-1\},w_{p_i,p_{i+1}}=w_{p_{2c+3-i},p_{2c+2-i}}=w_i$, $\forall i\in\{k,\ldots,c\},w_{p_i,p_{i+1}}=w_{p_{2c+3-i},p_{2c+2-i}}=w$ and $w_{p_{c+1},p_{c+2}}=w'$ (see Figure \ref{fig:possStrongCase23}). Note that this choice ensures us the following property when $level_r=level_b=mr$: $\mu(p_{c+1},b)\prec\mu(p_{c+1},r)$ (and by symmetry, $\mu(p_{c+2},r)\prec\mu(p_{c+2},b)$). Process $p_0$ is the real root and process $b$ is a Byzantine one. Note that the construction of $\mathcal{W}$ ensures the following properties when $level_r=level_b=mr$: $\forall i\in\{1,\ldots,c+1\},\mu(p_i,r)=\mu(p_{2c+3-i},b)$, $\mu(p_i,b)\prec\mu(p_i,r)$ and $\mu(p_{2c+3-i},r)\prec\mu(p_{2c+3-i},b)$.

This construction allows us to follow the same proof as in case 1 above.
\end{description}

\item[Case 3:] $\mathcal{M}$ has no or more than two fixed point, and is strictly decreasing.

If $\mathcal{M}$ has no fixed point and is strictly decreasing, then $|M|$ is not finite and then, we can apply the result of case 1 above since $c$ is a finite integer.

If $\mathcal{M}$ has two or more fixed points and is strictly decreasing, denote by $\Upsilon$ and $\Upsilon'$ two fixed points of $\mathcal{M}$. Without loss of generality, assume that $\Upsilon\prec\Upsilon'$. By the utility condition on $M$, we know that there exists sequences of metric values $m_0=mr,m_1,\ldots,m_l=\Upsilon$ and $m'_0=mr,m'_1,\ldots,m'_{l'}=\Upsilon'$ in $M$ and $w_0,w_1,\ldots,w_{l-1}$ and $w'_0,w'_1,\ldots,w'_{l'-1}$ in $W$ such that $\forall i\in\{1,\ldots,l\},m_i=met(m_{i-1},w_{i-1})$ and $\forall i\in\{1,\ldots,l'\},m'_i=met(m'_{i-1},w'_{i-1})$. Denote by $k$ and $k'$ the length of shortest such sequences. Note that this implies that $\forall i\in\{1,\ldots,k\},m_i\prec m_{i-1}$ and $\forall i\in\{1,\ldots,k'\},m'_i\prec m'_{i-1}$ (otherwise we can remove $m_i$ or $m'_i$ from the corresponding sequence). We distinguish the following cases:
\begin{description}
\item[Case 3.1:] $k>c+2$ or $k'>c+2$.\\
Without loss of generality, assume that $k>c+2$ (the second case is similar). We can use the same token as case 1 above.
\item[Case 3.2:] $k\leq c+2$ and $k'\leq c+2$.\\
Let $w$ be an arbitrary value of $W$. Let $S_2=(V,E,\mathcal{W})$ be the following weighted system $V=\{p_0=r,p_1,\ldots,p_{2c+2},p_{2c+3}=b\}$, $E=\{\{p_i,p_{i+1}\},i\in\{0,\ldots,2c+2\}\}$, $\forall i\in\{0,k-1\},w_{p_i,p_{i+1}}=w_i$, $\forall i\in\{0,k'-1\},w_{p_{2c+3-i},p_{2c+2-i}}=w'_i$ and $\forall i\in\{k,2c+2-k'\}, w_{p_{i},p_{i+1}}=w$ (see Figure \ref{fig:possStrongCase23}). Note that this choice ensures us the following property when $level_r=level_b=mr$: $\mu(p_{c+1},r)=\Upsilon\prec\Upsilon' =\mu(p_{c+1},b)$ and $\mu(p_{c+2},r)=\Upsilon\prec\Upsilon'=\mu(p_{c+2},b)$. Process $p_0$ is the real root and process $b$ is a Byzantine one. 

This construction allows us to follow a similar proof as in case 1 above (note that any process $u$ which satisfies $\mu(u,r)\prec\Upsilon'$ will be disturb infinitely often, in particular at least $p_{c+1}$ and $p_{c+2}$ which contradicts the $(t,c,1)$-strong stabilization of $\mathcal{P}$).
\end{description}
\end{description}
In any case, we show that there exists a system which contradicts the $(t,c,1)$-strong stabilization of $\mathcal{P}$ that ends the proof. 
\end{proof}

\subsection{Topology Aware Strong Stabilization}

First, we generalize the set $S_B^*$ previously defined for the $\mathcal{BFS}$ metric in \cite{DMT10cd} to any maximizable metric $\mathcal{M}=(M,W,mr,met,\prec)$.

\[S_{B}^*=\left\{v\in V\setminus B\left|\mu(v,r)\prec\underset{b\in B}{max_\prec}\{\mu(v,b)\}\right.\right\}\]

Intuitively, $S_B^*$ gathers the set of corrects processes that are strictly closer (according to $\mathcal{M}$) to a Byzantine process than the root. Figures from \ref{fig:ExSP} to \ref{fig:ExReliability} provide some examples of containment areas with respect to several maximizable metrics and compare it to $S_B$, the optimal containment area for TA strict stabilization. 

\begin{figure}[t]
\noindent \begin{centering} \ifx\JPicScale\undefined\def\JPicScale{0.75}\fi
\unitlength \JPicScale mm
\begin{picture}(165,95)(0,0)
\linethickness{0.3mm}
\put(50,85){\circle{10}}

\linethickness{0.3mm}
\put(30,65){\circle{10}}

\linethickness{0.3mm}
\put(70,65){\circle{10}}

\linethickness{0.3mm}
\put(30,45){\circle{10}}

\linethickness{0.3mm}
\put(70,45){\circle{10}}

\linethickness{0.3mm}
\put(50,25){\circle{10}}

\linethickness{0.3mm}
\multiput(30,70)(0.12,0.12){125}{\line(1,0){0.12}}
\linethickness{0.3mm}
\multiput(55,85)(0.12,-0.12){125}{\line(1,0){0.12}}
\linethickness{0.3mm}
\put(30,50){\line(0,1){10}}
\linethickness{0.3mm}
\multiput(30,40)(0.12,-0.12){125}{\line(1,0){0.12}}
\linethickness{0.3mm}
\multiput(55,25)(0.12,0.12){125}{\line(1,0){0.12}}
\linethickness{0.3mm}
\put(70,50){\line(0,1){10}}
\linethickness{0.3mm}
\put(35,65){\line(1,0){30}}
\linethickness{0.3mm}
\put(35,45){\line(1,0){30}}
\linethickness{0.3mm}
\multiput(35,65)(0.18,-0.12){167}{\line(1,0){0.18}}
\linethickness{0.3mm}
\put(140,85){\circle{10}}

\linethickness{0.3mm}
\put(120,65){\circle{10}}

\linethickness{0.3mm}
\put(160,65){\circle{10}}

\linethickness{0.3mm}
\put(120,45){\circle{10}}

\linethickness{0.3mm}
\put(160,45){\circle{10}}

\linethickness{0.3mm}
\put(140,25){\circle{10}}

\linethickness{0.3mm}
\multiput(120,70)(0.12,0.12){125}{\line(1,0){0.12}}
\linethickness{0.3mm}
\multiput(145,85)(0.12,-0.12){125}{\line(1,0){0.12}}
\linethickness{0.3mm}
\put(120,50){\line(0,1){10}}
\linethickness{0.3mm}
\multiput(120,40)(0.12,-0.12){125}{\line(1,0){0.12}}
\linethickness{0.3mm}
\multiput(145,25)(0.12,0.12){125}{\line(1,0){0.12}}
\linethickness{0.3mm}
\put(160,50){\line(0,1){10}}
\linethickness{0.3mm}
\put(125,65){\line(1,0){30}}
\linethickness{0.3mm}
\put(125,45){\line(1,0){30}}
\linethickness{0.3mm}
\multiput(125,65)(0.18,-0.12){167}{\line(1,0){0.18}}
\put(50,85){\makebox(0,0)[cc]{r}}

\put(140,85){\makebox(0,0)[cc]{r}}

\put(140,25){\makebox(0,0)[cc]{b}}

\put(50,25){\makebox(0,0)[cc]{b}}

\textcolor{blue}{
\put(115,25){\makebox(0,0)[cc]{$S_B^*$}}
}

\textcolor{red}{
\put(165,75){\makebox(0,0)[cc]{$S_B$}}
\put(20,25){\makebox(0,0)[cc]{$S_B=S_B^*$}}
}

\put(50,95){\makebox(0,0)[cc]{mr=0}}

\put(140,95){\makebox(0,0)[cc]{mr=0}}

\put(50,15){\makebox(0,0)[cc]{$level_b=0$}}

\put(140,15){\makebox(0,0)[cc]{$level_b=0$}}

\put(35,80){\makebox(0,0)[cc]{7}}

\put(65,80){\makebox(0,0)[cc]{6}}

\put(50,70){\makebox(0,0)[cc]{5}}

\put(75,55){\makebox(0,0)[cc]{4}}

\put(55,55){\makebox(0,0)[cc]{10}}

\put(50,40){\makebox(0,0)[cc]{8}}

\put(25,55){\makebox(0,0)[cc]{6}}

\put(65,30){\makebox(0,0)[cc]{32}}

\put(35,30){\makebox(0,0)[cc]{16}}

\textcolor{red}{
\linethickness{0.3mm}
\qbezier(25,30)(25.43,30.23)(30.48,32.17)
\qbezier(30.48,32.17)(35.53,34.11)(40,35)
\qbezier(40,35)(42.53,35.36)(44.98,35.21)
\qbezier(44.98,35.21)(47.43,35.05)(50,35)
\qbezier(50,35)(52.57,35.05)(55.02,35.21)
\qbezier(55.02,35.21)(57.47,35.36)(60,35)
\qbezier(60,35)(64.47,34.11)(69.52,32.17)
\qbezier(69.52,32.17)(74.57,30.23)(75,30)
}

\put(125,80){\makebox(0,0)[cc]{0}}

\put(155,80){\makebox(0,0)[cc]{0}}

\put(140,70){\makebox(0,0)[cc]{0}}

\put(165,55){\makebox(0,0)[cc]{0}}

\put(115,55){\makebox(0,0)[cc]{0}}

\put(155,30){\makebox(0,0)[cc]{0}}

\put(125,30){\makebox(0,0)[cc]{0}}

\put(140,40){\makebox(0,0)[cc]{0}}

\put(145,55){\makebox(0,0)[cc]{0}}

\textcolor{blue}{
\linethickness{0.3mm}
\qbezier(115,29.45)(115.43,29.68)(120.48,31.62)
\qbezier(120.48,31.62)(125.53,33.56)(130,34.45)
\qbezier(130,34.45)(132.53,34.81)(134.98,34.65)
\qbezier(134.98,34.65)(137.43,34.5)(140,34.45)
\qbezier(140,34.45)(142.57,34.5)(145.02,34.65)
\qbezier(145.02,34.65)(147.47,34.81)(150,34.45)
\qbezier(150,34.45)(154.47,33.56)(159.52,31.62)
\qbezier(159.52,31.62)(164.57,29.68)(165,29.45)
}

\textcolor{red}{
\linethickness{0.3mm}
\qbezier(115,80)(115.43,79.77)(120.48,77.83)
\qbezier(120.48,77.83)(125.53,75.89)(130,75)
\qbezier(130,75)(135.08,74.17)(140,74.17)
\qbezier(140,74.17)(144.92,74.17)(150,75)
\qbezier(150,75)(153.99,75.69)(157.63,77.04)
\qbezier(157.63,77.04)(161.26,78.4)(165,80)
}
\end{picture}
  \par\end{centering}
 \caption{Examples of containment areas for SP spanning tree construction.}
\label{fig:ExSP}
\end{figure}
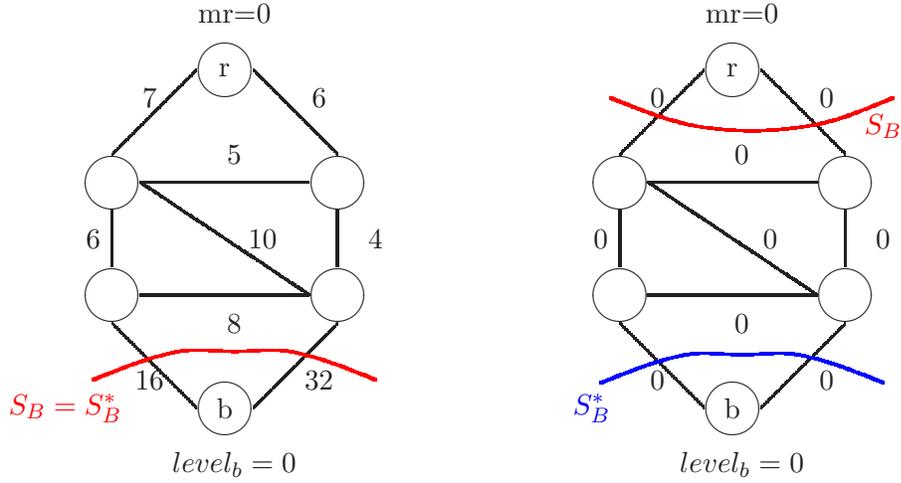

\begin{figure}[t]
\noindent \begin{centering} \ifx\JPicScale\undefined\def\JPicScale{0.75}\fi
\unitlength \JPicScale mm
\begin{picture}(173,95)(0,0)
\linethickness{0.3mm}
\put(50,85){\circle{10}}

\linethickness{0.3mm}
\put(30,65){\circle{10}}

\linethickness{0.3mm}
\put(70,65){\circle{10}}

\linethickness{0.3mm}
\put(30,45){\circle{10}}

\linethickness{0.3mm}
\put(70,45){\circle{10}}

\linethickness{0.3mm}
\put(50,25){\circle{10}}

\linethickness{0.3mm}
\multiput(30,70)(0.12,0.12){125}{\line(1,0){0.12}}
\linethickness{0.3mm}
\multiput(55,85)(0.12,-0.12){125}{\line(1,0){0.12}}
\linethickness{0.3mm}
\put(30,50){\line(0,1){10}}
\linethickness{0.3mm}
\multiput(30,40)(0.12,-0.12){125}{\line(1,0){0.12}}
\linethickness{0.3mm}
\multiput(55,25)(0.12,0.12){125}{\line(1,0){0.12}}
\linethickness{0.3mm}
\put(70,50){\line(0,1){10}}
\linethickness{0.3mm}
\put(35,65){\line(1,0){30}}
\linethickness{0.3mm}
\put(35,45){\line(1,0){30}}
\linethickness{0.3mm}
\multiput(35,65)(0.18,-0.12){167}{\line(1,0){0.18}}
\linethickness{0.3mm}
\put(140,85){\circle{10}}

\linethickness{0.3mm}
\put(120,65){\circle{10}}

\linethickness{0.3mm}
\put(160,65){\circle{10}}

\linethickness{0.3mm}
\put(120,45){\circle{10}}

\linethickness{0.3mm}
\put(160,45){\circle{10}}

\linethickness{0.3mm}
\put(140,25){\circle{10}}

\linethickness{0.3mm}
\multiput(120,70)(0.12,0.12){125}{\line(1,0){0.12}}
\linethickness{0.3mm}
\multiput(145,85)(0.12,-0.12){125}{\line(1,0){0.12}}
\linethickness{0.3mm}
\put(120,50){\line(0,1){10}}
\linethickness{0.3mm}
\multiput(120,40)(0.12,-0.12){125}{\line(1,0){0.12}}
\linethickness{0.3mm}
\multiput(145,25)(0.12,0.12){125}{\line(1,0){0.12}}
\linethickness{0.3mm}
\put(160,50){\line(0,1){10}}
\linethickness{0.3mm}
\put(125,65){\line(1,0){30}}
\linethickness{0.3mm}
\put(125,45){\line(1,0){30}}
\linethickness{0.3mm}
\multiput(125,65)(0.18,-0.12){167}{\line(1,0){0.18}}
\put(50,85){\makebox(0,0)[cc]{r}}

\put(140,85){\makebox(0,0)[cc]{r}}

\put(140,25){\makebox(0,0)[cc]{b}}

\put(50,25){\makebox(0,0)[cc]{b}}

\put(50,95){\makebox(0,0)[cc]{mr=10}}

\put(140,95){\makebox(0,0)[cc]{mr=10}}

\put(35,80){\makebox(0,0)[cc]{7}}

\put(65,80){\makebox(0,0)[cc]{6}}

\put(50,70){\makebox(0,0)[cc]{5}}

\put(75,52){\makebox(0,0)[cc]{4}}

\put(55,55){\makebox(0,0)[cc]{10}}

\put(25,55){\makebox(0,0)[cc]{6}}

\put(50,40){\makebox(0,0)[cc]{8}}

\put(65,30){\makebox(0,0)[cc]{32}}

\put(30,30){\makebox(0,0)[cc]{16}}

\put(140,15){\makebox(0,0)[cc]{$level_b=10$}}

\put(50,15){\makebox(0,0)[cc]{$level_b=10$}}

\put(120,30){\makebox(0,0)[cc]{11}}

\put(155,30){\makebox(0,0)[cc]{12}}

\put(155,80){\makebox(0,0)[cc]{10}}

\put(120,80){\makebox(0,0)[cc]{7}}

\put(165,55){\makebox(0,0)[cc]{13}}

\put(140,70){\makebox(0,0)[cc]{6}}

\put(145,55){\makebox(0,0)[cc]{5}}

\put(115,55){\makebox(0,0)[cc]{3}}

\put(140,40){\makebox(0,0)[cc]{1}}

\textcolor{blue}{
\linethickness{0.3mm}
\qbezier(105,48.29)(105.59,48.67)(112.37,51.07)
\qbezier(112.37,51.07)(119.14,53.48)(125,53.29)
\qbezier(125,53.29)(127.81,52.88)(130.18,51.35)
\qbezier(130.18,51.35)(132.54,49.83)(135,48.29)
\qbezier(135,48.29)(138.94,45.85)(142.39,43.06)
\qbezier(142.39,43.06)(145.84,40.27)(150,38.29)
\qbezier(150,38.29)(154.85,36.17)(159.79,35.03)
\qbezier(159.79,35.03)(164.73,33.9)(170,33.29)
\put(105,44.14){\makebox(0,0)[cc]{$S_B^*$}}
}

\textcolor{red}{
\linethickness{0.3mm}
\qbezier(105,53.86)(105.43,54.17)(110.46,56.36)
\qbezier(110.46,56.36)(115.5,58.54)(120,58.86)
\qbezier(120,58.86)(124.01,58.43)(127.58,55.89)
\qbezier(127.58,55.89)(131.14,53.36)(135,53.86)
\qbezier(135,53.86)(139.98,55.48)(142.9,60.37)
\qbezier(142.9,60.37)(145.83,65.27)(150,68.86)
\qbezier(150,68.86)(152.33,70.57)(154.77,71.87)
\qbezier(154.77,71.87)(157.21,73.18)(160,73.86)
\qbezier(160,73.86)(162.53,74.5)(165.07,74.64)
\qbezier(165.07,74.64)(167.61,74.79)(170,73.86)
\put(173,68.71){\makebox(0,0)[cc]{$S_B$}}
}

\textcolor{red}{
\linethickness{0.3mm}
\qbezier(20,75)(20.66,75.25)(27.63,75.98)
\qbezier(27.63,75.98)(34.6,76.72)(40,75)
\qbezier(40,75)(41.65,74.26)(42.8,72.89)
\qbezier(42.8,72.89)(43.96,71.52)(45,70)
\qbezier(45,70)(46.51,67.55)(47.23,64.72)
\qbezier(47.23,64.72)(47.94,61.9)(50,60)
\qbezier(50,60)(54.2,56.72)(59.43,55.56)
\qbezier(59.43,55.56)(64.65,54.39)(70,55)
\qbezier(70,55)(72.83,55.38)(75.26,56.81)
\qbezier(75.26,56.81)(77.68,58.25)(80,60)
\put(10,70){\makebox(0,0)[cc]{$S_B=S_B^*$}}
}
\end{picture}
  \par\end{centering}
 \caption{Examples of containment areas for flow spanning tree construction.}
\label{fig:ExFlow}
\end{figure}

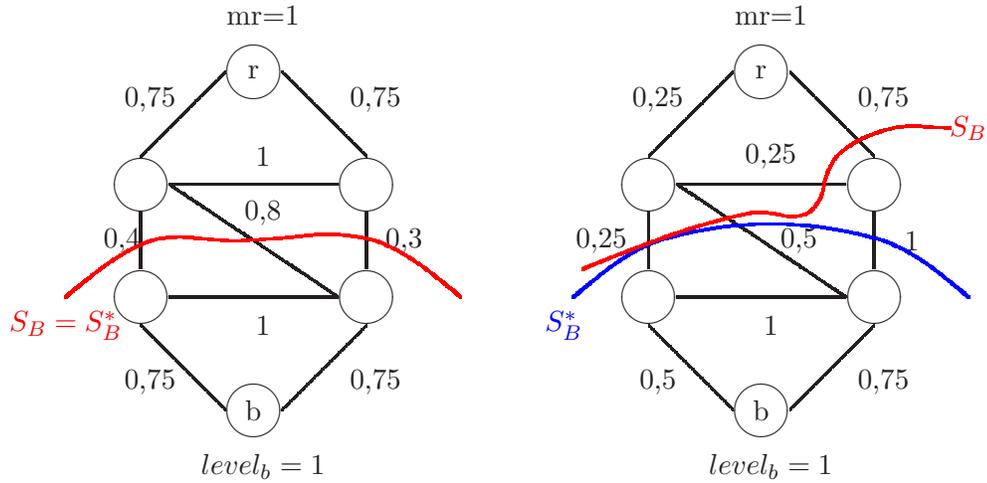
\begin{figure}[t]
\noindent \begin{centering} \ifx\JPicScale\undefined\def\JPicScale{0.75}\fi
\unitlength \JPicScale mm
\begin{picture}(175,95)(0,0)
\linethickness{0.3mm}
\put(50,85){\circle{10}}

\linethickness{0.3mm}
\put(30,65){\circle{10}}

\linethickness{0.3mm}
\put(70,65){\circle{10}}

\linethickness{0.3mm}
\put(30,45){\circle{10}}

\linethickness{0.3mm}
\put(70,45){\circle{10}}

\linethickness{0.3mm}
\put(50,25){\circle{10}}

\linethickness{0.3mm}
\multiput(30,70)(0.12,0.12){125}{\line(1,0){0.12}}
\linethickness{0.3mm}
\multiput(55,85)(0.12,-0.12){125}{\line(1,0){0.12}}
\linethickness{0.3mm}
\put(30,50){\line(0,1){10}}
\linethickness{0.3mm}
\multiput(30,40)(0.12,-0.12){125}{\line(1,0){0.12}}
\linethickness{0.3mm}
\multiput(55,25)(0.12,0.12){125}{\line(1,0){0.12}}
\linethickness{0.3mm}
\put(70,50){\line(0,1){10}}
\linethickness{0.3mm}
\put(35,65){\line(1,0){30}}
\linethickness{0.3mm}
\put(35,45){\line(1,0){30}}
\linethickness{0.3mm}
\multiput(35,65)(0.18,-0.12){167}{\line(1,0){0.18}}
\linethickness{0.3mm}
\put(140,85){\circle{10}}

\linethickness{0.3mm}
\put(120,65){\circle{10}}

\linethickness{0.3mm}
\put(160,65){\circle{10}}

\linethickness{0.3mm}
\put(120,45){\circle{10}}

\linethickness{0.3mm}
\put(160,45){\circle{10}}

\linethickness{0.3mm}
\put(140,25){\circle{10}}

\linethickness{0.3mm}
\multiput(120,70)(0.12,0.12){125}{\line(1,0){0.12}}
\linethickness{0.3mm}
\multiput(145,85)(0.12,-0.12){125}{\line(1,0){0.12}}
\linethickness{0.3mm}
\put(120,50){\line(0,1){10}}
\linethickness{0.3mm}
\multiput(120,40)(0.12,-0.12){125}{\line(1,0){0.12}}
\linethickness{0.3mm}
\multiput(145,25)(0.12,0.12){125}{\line(1,0){0.12}}
\linethickness{0.3mm}
\put(160,50){\line(0,1){10}}
\linethickness{0.3mm}
\put(125,65){\line(1,0){30}}
\linethickness{0.3mm}
\put(125,45){\line(1,0){30}}
\linethickness{0.3mm}
\multiput(125,65)(0.18,-0.12){167}{\line(1,0){0.18}}
\put(50,85){\makebox(0,0)[cc]{r}}

\put(140,85){\makebox(0,0)[cc]{r}}

\put(140,25){\makebox(0,0)[cc]{b}}

\put(50,25){\makebox(0,0)[cc]{b}}

\textcolor{blue}{
\put(105,40){\makebox(0,0)[cc]{$S_B^*$}}
}

\textcolor{red}{
\put(175,75){\makebox(0,0)[cc]{$S_B$}}
\put(15,40){\makebox(0,0)[cc]{$S_B=S_B^*$}}
}

\put(50,95){\makebox(0,0)[cc]{mr=1}}

\put(50,15){\makebox(0,0)[cc]{$level_b=1$}}

\put(140,95){\makebox(0,0)[cc]{mr=1}}

\put(140,15){\makebox(0,0)[cc]{$level_b=1$}}

\put(70,80){\makebox(0,0)[cc]{0,75}}

\put(30,80){\makebox(0,0)[cc]{0,75}}

\put(70,30){\makebox(0,0)[cc]{0,75}}

\put(30,30){\makebox(0,0)[cc]{0,75}}

\put(50,40){\makebox(0,0)[cc]{1}}

\put(50,70){\makebox(0,0)[cc]{1}}

\put(50,60){\makebox(0,0)[cc]{0,8}}

\put(25,55){\makebox(0,0)[cc]{0,4}}

\put(75,55){\makebox(0,0)[cc]{0,3}}

\textcolor{red}{
\linethickness{0.3mm}
\qbezier(15,45)(15.38,45.46)(20.28,49.33)
\qbezier(20.28,49.33)(25.17,53.2)(30,55)
\qbezier(30,55)(33.69,55.98)(37.41,55.56)
\qbezier(37.41,55.56)(41.13,55.14)(45,55)
\qbezier(45,55)(51.45,55.23)(57.65,55.93)
\qbezier(57.65,55.93)(63.85,56.63)(70,55)
\qbezier(70,55)(74.83,53.2)(79.72,49.33)
\qbezier(79.72,49.33)(84.62,45.46)(85,45)
}

\put(120,80){\makebox(0,0)[cc]{0,25}}

\put(140,70){\makebox(0,0)[cc]{0,25}}

\put(160,80){\makebox(0,0)[cc]{0,75}}

\put(165,55){\makebox(0,0)[cc]{1}}

\put(145,55){\makebox(0,0)[cc]{0,5}}

\put(140,40){\makebox(0,0)[cc]{1}}

\put(110,55){\makebox(0,0)[cc]{0,25}}

\put(160,30){\makebox(0,0)[cc]{0,75}}

\put(120,30){\makebox(0,0)[cc]{0,5}}

\textcolor{blue}{
\linethickness{0.3mm}
\qbezier(105,45)(105.38,45.46)(110.28,49.33)
\qbezier(110.28,49.33)(115.17,53.2)(120,55)
\qbezier(120,55)(129.9,57.99)(140,57.99)
\qbezier(140,57.99)(150.1,57.99)(160,55)
\qbezier(160,55)(164.83,53.2)(169.72,49.33)
\qbezier(169.72,49.33)(174.62,45.46)(175,45)
}

\textcolor{red}{
\linethickness{0.3mm}
\qbezier(105,50)(105.87,50.46)(115.97,54.35)
\qbezier(115.97,54.35)(126.07,58.23)(135,60)
\qbezier(135,60)(137.59,60.19)(140.19,59.54)
\qbezier(140.19,59.54)(142.8,58.88)(145,60)
\qbezier(145,60)(147.18,61.68)(147.69,64.73)
\qbezier(147.69,64.73)(148.21,67.78)(150,70)
\qbezier(150,70)(152.05,72.01)(154.61,73.19)
\qbezier(154.61,73.19)(157.17,74.36)(160,75)
\qbezier(160,75)(162.5,75.52)(164.96,75.3)
\qbezier(164.96,75.3)(167.42,75.07)(170,75)
}
\end{picture}
  \par\end{centering}
 \caption{Examples of containment areas for reliability spanning tree construction.}
\label{fig:ExReliability}
\end{figure}

Now, we can state our generalization of Theorem \ref{th:impTAStrongBFS}.

\begin{theorem}\label{th:impTAstrong}
Given a maximizable metric $\mathcal{M}=(M,W,mr,met,\prec)$, even under the central daemon, there exists no $(t,A_B^*,1)$-TA-strongly stabilizing protocol for maximum metric spanning tree construction with respect to $\mathcal{M}$ where $A_B^*\varsubsetneq S_B^*$ and $t$ is a given finite integer.
\end{theorem}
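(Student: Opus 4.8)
The plan is to adapt the impossibility scheme of Theorem~\ref{th:necessarConditionStrong} (and, for the $\mathcal{BFS}$ metric, of Theorem~\ref{th:impTAStrongBFS}): I will exhibit a small weighted system carrying a single Byzantine process on which $S_B^{*}$ is a non-empty singleton whose O-variable cannot stay eventually frozen. First observe that if $|M|=1$ then $\mu(v,r)=\mu(v,b)=mr$ for every $v$ and $b$, so $S_{B}^{*}=\emptyset$ for every Byzantine placement and the statement is vacuous; the same holds for a few other degenerate metrics (for instance $|M|=2$ with no weight fixing $mr$, where $\mu(v,r)=\mu(v,b)$ for every $v\neq r$). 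Hence I may assume $\mathcal{M}$ admits an assignment making $S_B^{*}$ non-empty. Reasoning by contradiction, assume $\mathcal{P}$ is a $(t,A_B^{*},1)$-TA strongly stabilizing protocol for maximum metric spanning tree construction with respect to $\mathcal{M}$ with $A_B^{*}\varsubsetneq S_B^{*}$.

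First I would build the adversarial system as a short weighted path $p_0=r,p_1,\ldots,p_L=b$, with $b$ the unique Byzantine process and $L$ a small constant. Using the utility condition together with boundedness and monotonicity, the weights are chosen so that, with $B=\{b\}$: (i) exactly one process $w$ among $p_1,\ldots,p_{L-1}$ satisfies $\mu(w,r)\prec\mu(w,b)$, so $S_{B}^{*}=\{w\}$; and (ii) $\mu(w,b)$ is realised by a very short sub-path, so that in any configuration where $prnt_b=\bot$, $level_b=mr$ and every correct process other than (possibly) $w$ satisfies its specification, condition~3 of the definition of an $\mathcal{M}$-path (the local $\max_\prec$ requirement) at $w$ forces $prnt_w$ to point towards $b$ and $level_w=\mu(w,b)$. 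A short case analysis on the local shape of $met$ near $mr$ produces such a path: if iterating $met$ from $mr$ already yields two distinct values in one step, a three-node path $r,w,b$ works; if the unique one-step image of $mr$ is not a fixed point, a four-node path $r,p_1,w,b$ works; and the remaining situations are exactly the degenerate metrics above. In every non-vacuous case $|S_{B}^{*}|=1$, so $A_B^{*}\varsubsetneq S_B^{*}$ forces $A_B^{*}=\emptyset$, and in particular $w$ is $A_B^{*}$-correct.

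Then I would derive the contradiction, working under the central daemon. Start from any configuration in which $b$ holds the root state ($prnt_b=\bot$, $level_b=mr$) and keep $b$ frozen there. Since $\mathcal{P}$ is $(t,A_B^{*},1)$-TA strongly stabilizing, this execution reaches a $(t,k,A_B^{*},1)$-TA time contained configuration and then, by the second requirement of that notion, an $A_B^{*}$-legitimate configuration $\rho_1$, reached while $b$ is still frozen, after which the O-variables of every $A_B^{*}$-correct process (hence of every correct process here) never change again, no matter how $b$ subsequently behaves. In $\rho_1$ all correct processes satisfy their specification, so by property~(ii) --- using $level_b=mr$ and the fact that the processes strictly between $w$ and $r$ carry the values dictated by their own specifications --- one gets $level_w=\mu(w,b)$ in $\rho_1$. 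Now extend the execution from $\rho_1$ by letting $b$ behave as a correct process executing $\mathcal{P}$: the suffix is fault-free, so by self-stabilization of $\mathcal{P}$ (a TA strongly stabilizing protocol is in particular self-stabilizing) it converges to the unique maximum metric tree rooted at $r$, in which $level_w=\mu(w,r)$. Since $\mu(w,r)\neq\mu(w,b)$ by construction, $w$ changes an O-variable after $\rho_1$, contradicting the defining property of $\rho_1$. Equivalently, alternating $b$ between the root state and correct behaviour yields an execution from $\rho_1$ with infinitely many $A_B^{*}$-TA-disruptions and infinitely many O-variable changes at $w$, which contradicts the finiteness of $t$, resp.\ of $k$.

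I expect the delicate step to be the construction: choosing the path and its weights so that $S_B^{*}$ collapses to a single witness $w$ whose route through $b$ is strictly $\prec$-better than its route through $r$ and than any route through its $r$-side neighbour in $\rho_1$, so that specification-compliance cannot let $\mathcal{P}$ orient $w$ towards $r$ in $\rho_1$. Carrying this out uniformly over all maximizable metrics is exactly what forces the case analysis on $met$ and what isolates the vacuous cases. Once the weights are fixed, forcing $level_w$ in $\rho_1$ and concluding via the second requirement of a TA time contained configuration (or via disruption counting) is routine and parallels Case~1 of the proof of Theorem~\ref{th:necessarConditionStrong}.
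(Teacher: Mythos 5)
Your proposal is correct, and it reaches the conclusion by a genuinely different construction than the paper. The paper's proof, after disposing of the cases $|M|=1$ and ``the one-step image $m=met(mr,w)$ is a fixed point'' (where $S_B^*=\emptyset$ makes the claim vacuous), builds a six-node two-branch gadget ($r$--$u$--$v$--$b$--$v'$--$u'$--$r$ with weights $w,w',w,w,w',w$) in which $S_B^*=\{v,v'\}$; it then exploits the weight-preserving automorphism swapping $r$ and $b$, letting $b$ mimic $r$ step for step from a symmetric initial configuration, to force convergence to the configuration where $v,v'$ hang off $b$, and finally alternates $b$ between root-impersonation and correct behaviour to manufacture infinitely many TA-disruptions of whichever of $v,v'$ lies outside $A_B^*$. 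You instead shrink the gadget to a $3$- or $4$-node path chosen so that $S_B^*$ is a singleton $\{w\}$, which collapses $A_B^*\varsubsetneq S_B^*$ to $A_B^*=\emptyset$ and makes the ``some witness is unprotected'' step trivial; and you replace the symmetry/mimicking argument by two observations: the stabilization guarantee already forces an $A_B^*$-legitimate configuration to be reached while $b$ merely sits frozen in the root state, and specification-compliance of all correct processes, given $b$'s frozen state, \emph{uniquely} pins $level_w=\mu(w,b)$ via condition 3 of the $\mathcal{M}$-path definition (I checked this forcing goes through in both your three- and four-node cases, including the constraint that $p_1$ must orient towards $r$). Your case split on $met(mr,\cdot)$ --- two distinct one-step images, versus a unique non-fixed-point image, versus a unique fixed-point image (the genuinely vacuous case) --- is exhaustive and in fact handles the degenerate situations a bit more cleanly than the paper's Case 2.1. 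The concluding alternation argument is identical to the paper's; the only point to make explicit when writing it up is that each ``freeze'' phase is run until the time-contained configuration guaranteed for the hypothetical freeze-forever extension is reached, so that every cycle really does start from an $A_B^*$-legitimate and $A_B^*$-stable configuration --- a standard step that the paper also leaves implicit.
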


\begin{proof}
Let $\mathcal{M}=(M,W,mr,met,\prec)$ be a maximizable metric and $\mathcal{P}$ be a $(t,A_B^*,1)$-TA-strongly stabilizing protocol for maximum metric spanning tree construction protocol with respect to $\mathcal{M}$ where $A_B^*\varsubsetneq S_B^*$ and $t$ is a finite integer. We must distinguish the following cases:

\begin{description}
\item[Case 1:] $|M|=1$.\\
Denote by $m$ the metric value such that $M=\{m\}$. For any system and for any process $v$, we have $\mu(v,r)=\underset{b\in B}{min_\prec}\{\mu(v,b)\}=m$. Consequently, $S_B^*=\emptyset$ for any system. Then, it is absurd to have $A_B^*\varsubsetneq S_B^*$.
\item[Case 2:] $|M|\geq 2$.\\
By definition of a bounded metric, we can deduce that there exists $m\in M$ and $w\in W$ such that $m=met(mr,w)\prec mr$. Then, we must distinguish the following cases:
\begin{description}
\item[Case 2.1:] $m$ is a fixed point of $\mathcal{M}$.\\
Let $S$ be a system such that any edge incident to the root or a Byzantine process has a weight equals to $w$. Then, we can deduce that we have: $m=\underset{b\in B}{max_\prec}\{\mu(r,b)\}\prec \mu(r,r)=mr$ and for any correct process $v\neq r$, $\mu(v,r)=\underset{b\in B}{max_\prec}\{\mu(v,b)\}=m$. Hence, $S_B^*=\emptyset$ for any such system. Then, it is absurd to have $A_B^*\varsubsetneq S_B^*$.
\item[Case 2.2:] $m$ is not a fixed point of $\mathcal{M}$.\\
This implies that there exists $w'\in W$ such that: $met(m,w')\prec m$ (remember that $\mathcal{M}$ is bounded). Consider the following system: $V=\{r,u,u',v,v',b\}$, $E=\{\{r,u\},\{r,u'\},$ $\{u,v\},\{u',v'\},\{v,b\},\{v',b\}\}$, $w_{r,u}=w_{r,u'}=w_{v,b}=w_{v',b}=w$, and $w_{u,v}=w_{u',v'}=w'$ ($b$ is a Byzantine process). We can see that $S_B^*=\{v,v'\}$. Since $A_B^*\varsubsetneq S_B$, we have: $v\notin A_B^*$ or $v'\notin A_B^*$. Consider now the following configuration $\rho_0$: $prnt_r=prnt_b=\bot$, $level_r=level_b=mr$, and $prnt$, $level$  variables of other processes are arbitrary (see Figure \ref{fig:impTAstrong}, other variables may have arbitrary values but other variables of $b$ are identical to those of $r$).

Assume now that $b$ takes exactly the same actions as $r$ (if any) immediately after $r$ (note that $r\notin A_B^*$ and hence $prnt_r=\bot$ and $level_r=mr$ still hold by closure and then $prnt_b=\bot$ and $level_b=mr$ still hold too). Then, by symmetry of the execution and by convergence of $\mathcal{P}$ to $spec$, we can deduce that the system reaches in a finite time a configuration $\rho_1$ (see Figure \ref{fig:impTAstrong}) in which: $prnt_r=prnt_b=\bot$, $prnt_u=prnt_{u'}=r$, $prnt_v=prnt_{v'}=b$, $level_r=level_b=mr$, and $level_u=level_{u'}=level_v=level_{v'}=m$ (because this configuration is the only one in which all correct process $v$ satisfies $spec(v)$ when $prnt_r=prnt_b=\bot$ and $level_r=level_b=mr$ since $met(m,w')\prec m$). Note that $\rho_1$ is $A_B^*$-legitimate for $spec$ and $A_B^*$-stable (whatever $A_B^*$ is).

Assume now that $b$ behaves as a correct processor with respect to $\mathcal{P}$. Then, by convergence of $\mathcal{P}$ in a fault-free system starting from $\rho_1$ which is not legitimate (remember that a TA-strongly stabilizing algorithm is a special case of self-stabilizing algorithm), we can deduce that the system reach in a finite time a configuration $\rho_2$ (see Figure \ref{fig:impTAstrong}) in which: $prnt_r=\bot$, $prnt_u=prnt_{u'}=r$, $prnt_v=u$, $prnt_{v'}=u'$, $prnt_b=v$ (or $prnt_b=v'$), $level_r=mr$, $level_u=level_{u'}=m$ $level_v=level_{v'}=met(m,w')=m'$, and $level_b=met(m',w)=m''$. Note that processes $v$ and $v'$ modify their O-variables in the portion of execution between $\rho_1$ and $\rho_2$ and that $\rho_2$ is $A_B^*$-legitimate for $spec$ and $A_B^*$-stable (whatever $A_B^*$ is). Consequently, this portion of execution contains at least one $A_B^*$-TA-disruption (whatever $A_B^*$ is).

Assume now that the Byzantine process $b$ takes the following state: $prnt_b=\bot$ and $level_b=mr$. This step brings the system into configuration $\rho_3$ (see Figure \ref{fig:impTAstrong}). From this configuration, we can repeat the execution we constructed from $\rho_0$. By the same token, we obtain an execution of $\mathcal{P}$ which contains $c$-legitimate and $c$-stable configurations (see $\rho_1$) and an infinite number of $A_B^*$-TA-disruption (whatever $A_B^*$ is) which contradicts the $(t,A_B^*,1)$-TA-strong stabilization of $\mathcal{P}$.
\end{description}
\end{description}
\end{proof}

\begin{figure}[t]
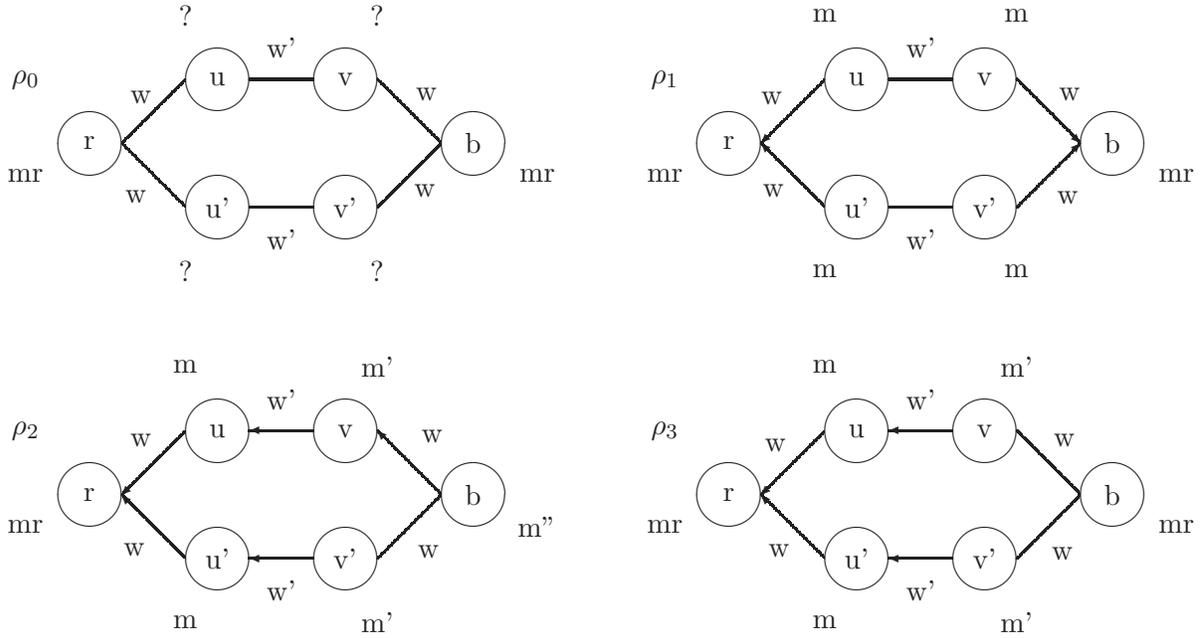

\noindent \begin{centering} \include{impTAstrong}
  \par\end{centering}
 \caption{Configurations used in proof of Theorem \ref{th:impTAstrong}.}
\label{fig:impTAstrong}
\end{figure}

\section{Conclusion}

In this paper, we presented two necessary conditions to achieve strong stabilization and topology-aware strong stabilization in maximum metric tree construction. Our work obviously leads to the following open question: is there a topology-aware strongly stabilizing protocol that ensures a containmemt area equal to $S_B^*$? We conjecture that it is the case.   

\bibliographystyle{plain}
\bibliography{biblio}

\end{document}